\DeclareMathOperator{\rk}{rk}
\def\cA{\mathcal{A}}
\def\cF{\mathcal{F}}
\def\cM{\mathcal{M}}
\def\cI{\mathcal{I}}
\def\cP{\mathcal{P}}
\def\cZ{\mathcal{Z}}
\def\ve{\varepsilon}
\def\R{\mathbb{R}}
\def\C{\mathbb{C}}
\def\N{\mathbb{N}}
\def\id{\text{id}}
\DeclareMathOperator{\tr}{tr}
\DeclareMathOperator{\SEP}{SEP}
\DeclareMathOperator{\h}{h}
\DeclareMathOperator{\Wg}{Wg}
\newcommand{\ket}[1]{\mathinner{|#1\rangle}}
\newcommand{\braket}[2]{\mathinner{\langle #1|#2\rangle}}
\newcommand{\dyad}[1]{| #1\rangle \langle #1|}
\newcommand{\ot}[0]{\otimes}
\newcommand{\one}[0]{\mathds{1}}
\renewcommand{\a}{\alpha}
\newcommand{\vv}{\ket{v_1}, \dots, \ket{v_n}}
\newcommand{\vvv}{\dyad{v_1} \ot \cdots \ot \dyad{v_n}}
\newcommand{\cdn}{(\C^d)^{\otimes n}}
\newcommand{\nn}{\nonumber}
\definecolor{alizarin}{rgb}{0.82, 0.1, 0.26}
\newtheorem{theorem}{Theorem}
\newtheorem{proposition}[theorem]{Proposition}
\newtheorem{lemma}[theorem]{Lemma}
\newtheorem{corollary}[theorem]{Corollary}
\newtheorem{remark}[theorem]{Remark}
\newtheorem{thmA}{Theorem}
\newcommand{\overbar}[1]{\mkern 1.5mu\overline{\mkern-1.5mu#1\mkern-1.5mu}\mkern 1.5mu}
\begin{document}

\setcounter{tocdepth}{3}
\contentsmargin{2.55em} 
\dottedcontents{section}[3.8em]{}{2.3em}{.4pc} 
\dottedcontents{subsection}[6.1em]{}{3.2em}{.4pc}
\dottedcontents{subsubsection}[8.4em]{}{4.1em}{.4pc}

\title[
Dimension-free entanglement detection
]{
Dimension-free entanglement detection \\ in multipartite Werner states
}

\date{\today}

\author{Felix Huber}
\address{
Institute of Theoretical Physics, 
Jagiellonian University, 
30-348 Krak\'{o}w, 
Poland}
\email{felix.huber@uj.edu.pl}
\thanks{FH was supported by the FNP through TEAM-NET (POIR.04.04.00-00-17C1/18-00).}

\author{Igor Klep}
\address{
Faculty of Mathematics and Physics, 
University of Ljubljana, 
Slovenia}
\email{igor.klep@fmf.uni-lj.si}
\thanks{IK was supported by the Slovenian Research Agency grants J1-2453, J1-8132, N1-0217 and P1-0222.}

\author{Victor Magron}
\address{
LAAS-CNRS \& Institute of Mathematics from Toulouse, 
France}
\email{victor.magron@laas.fr}
\thanks{VM was supported by the French Research Agency grants  ANR-18-ERC2-0004-01 and ANR-19-PI3A-0004, the EU's Horizon 2020 research and innovation programme 813211, and the PHC Proteus grant 46195TA}

\author{Jurij Vol{\v c}i{\v c}}
\address{
Department of Mathematical Sciences, 
University of Copenhagen, 
Denmark}
\email{jv@math.ku.dk}
\thanks{JV was supported by the National Science Foundation grant DMS-1954709.}

\begin{abstract}
Werner states are multipartite quantum states that are 
invariant under the diagonal conjugate action of the unitary group.
This paper gives a complete characterization of their entanglement that 
is independent of the underlying local Hilbert space: for every entangled Werner state there exists a
dimension-free entanglement witness.
The construction of such a witness is formulated as an optimization problem.
To solve it, two semidefinite programming hierarchies are introduced. 
The first one is derived using real algebraic geometry applied to positive polynomials in the entries of a Gram matrix, and
is complete in the sense that for every entangled
Werner state it converges to a witness.
The second one is based on a sum-of-squares certificate for the positivity of trace polynomials in noncommuting variables, and is a relaxation
that involves smaller semidefinite constraints.
\end{abstract}

\keywords{Werner state, entanglement witness, symmetric group, trace polynomial, semidefinite programming}

\subjclass[2020]{81P42, 46N50, 20C35, 90C22; 81-08, 16R30, 13J30}

\maketitle

\tableofcontents

\section{Introduction}

\subsection{Entanglement}
An $n$-partite {quantum state} with local dimension~$d$ is represented by a 
positive semidefinite matrix with trace one
in the space $L(\cdn)$ of linear operators acting on $\cdn$.
A quantum state $\varrho \in L(\cdn)$ is said to be 
{\bf separable} or classically correlated, 
if it can be written as a convex combination of {product states}
\begin{equation*}
\sum_{i} p_i \varrho_i^{(1)} \otimes \dots \otimes \varrho_i^{(n)}\,,
\end{equation*}
where $\varrho_i^{(j)} \in L(\C^d)$ are states,
and $p_i\geq 0$ satisfy $\sum_i p_i = 1$. 
We denote the set of separable states on $n$ systems with $d$ levels each as $\SEP(d,n)$.
A state is termed {\bf entangled} if it is not separable~\cite{GUHNE20091}.
The detection of entanglement can be done with linear operators known as {\bf entanglement witnesses}.
These are operators $\mathcal{W}\in L(\cdn)$ for which 
$\tr(\mathcal{W} \varrho) \geq 0$ 
holds for 
all separable states~$\varrho$
and 
$
\tr(\mathcal{W} \varphi) < 0$
holds for at least one entangled state $\varphi$.
Note that since separable sets are defined as the convex hull of product states,
it suffices to ascertain that $\tr(\mathcal{W} \varrho) \geq 0$ holds for all product states~$\varrho$ only.

Nevertheless, 
characterizing the set of entangled states is computationally hard \cite{gurvits03} and it helps to restrict the set of states under consideration. 
Here we focus on {\bf Werner states}~\cite{werner89,PhysRevA.63.042111,CKMR07, MaassenKuemmerer2019,huber2020positive}:
these are invariant under
the diagonal action of the unitary group $\mathcal{U}_d$, i.e.,
$\varrho = U^{\otimes n} \varrho (U^\dag)^{\otimes n}$ for all $U \in \mathcal{U}_d$.
As a consequence of the Schur-Weyl duality \cite[Theorem 9.3.1]{Procesi2007LieGroups}, Werner states are linear combinations of permutation operators.
Note that an element $\sigma$ in the symmetric group $S_n$ acts on the Hilbert space $(\C^d)^{\otimes n}$ by permuting its tensor factors.
With some abuse of notation we can then write a Werner state $\varrho$ as
\begin{equation}\label{eq:introeta}
\varrho=\sum_{\sigma \in S_n} r_\sigma \sigma,\qquad r_\sigma  \in \C\,.
\end{equation}
That is, Werner states are parametrized by elements of the group algebra $\C S_n$.
It is interesting to note that Werner states have applications both 
in quantum information theory as well as in many-body physics:
they were introduced to show that entanglement and 
non-locality are distinct concepts~\cite{werner89},
and their entanglement structure can be used to characterize 
correlations close to phase transitions 
in magnetic systems~\cite{PhysRevA.89.032330}.

To detect entanglement in Werner states, it is 
easy to see that one can restrict to entanglement witnesses $\mathcal{W}$ 
that exhibit the same invariance as the states.
Thus we can represent them by
$  w = \sum_{\sigma \in S_n} w_\sigma \sigma$ with $w_\sigma \in \C$.
We say that $w\in\C S_n$ is a {\bf dimension-free witness} if the operator $\mathcal{W}$ represented by $w$ is a witness regardless of the local dimension $d$.

The description \eqref{eq:introeta} of Werner states removes the underlying local Hilbert space, which is especially useful when the latter has large dimension. This raises a natural question: can the entanglement of Werner states be also described in a dimension-independent manner? Furthermore, 
does such a dimension-free description yield a computationally efficient procedure 
for entanglement detection?
This paper provides affirmative answers to both questions.

For three-partite Werner states, a description of entanglement without referring to the local dimension
was given in \cite{PhysRevA.63.042111}.
Here we present a complete characterization for the entire class of Werner states 
(for any number of local systems). 
To efficiently detect their entanglement, we employ
semidefinite programming hierarchies.

\subsection{SDP hierarchies}
Semidefinite programming (SDP) hierarchies have emerged as powerful tools applicable 
to a wide range of problems in quantum information theory~\cite{
PhysRevA.74.052306,
Cavalcanti_2016,
mironowicz2018_phdthesis,
Wang2018_phdthesis,
BertaBorderiFawziScholz2021}. 
Solving an SDP \cite{anjos2011handbook} means 
minimizing a linear function under linear matrix inequality constraints, which is a convex problem.
The advantages of SDPs lie with the existence of efficient algorithms,  the ready availability of numerical solvers, 
and ability to provide solution certificates~\cite{Vandenberghe1996, Blekherman2013}.
When formulated in this framework, many quantities that are otherwise difficult to compute can be approximated 
by a converging sequence of increasingly larger SDP instances. 

A well-known example is the Navascu\'es-Pironio-Ac\'in hierarchy for finding the maximum violation levels of Bell inequalities~\cite{Navascu_s_2008}. 
This hierarchy gives a sequence of outer approximations to the set of correlations that 
can be obtained from quantum systems of arbitrarily large (even infinite-dimensional) local Hilbert space.
This is in contrast with the hierarchies used in entanglement detection: 
here the available hierarchies detect entanglement of quantum states where the local dimension is {\em fixed}~\cite{
PhysRevA.69.022308,
PhysRevA.70.062309,
PhysRevLett.106.190502,
PhysRevA.80.052306,
Lancien_2015,
PhysRevA.70.062317,
PhysRevA.96.032312,
HarrowNatarajanWu2017,
acin2021prept}. 
While extremely powerful for small systems, these hierarchies are afflicted by the exponential scaling of the problem size with the local Hilbert space dimension.

It is thus of interest to not only approach non-locality, 
but also entanglement in a dimension-free manner. 
With the help of methods from commutative and noncommutative polynomial optimization \cite{Las01sos,SchererHol2006,klep2020optimization},
we use our dimension-free characterization of Werner states to
detect their entanglement with SDP hierarchies that do not 
depend on the local Hilbert space dimension.

\subsection{Main results}

The first main contribution of this paper reveals the dimension-independent nature of entanglement for Werner states.

\begin{thmA}\label{thm:dim-free1}
	For all $d,n$ and 
	every entangled Werner state $\varrho \in L(\cdn)$ there exists a 
	dimension-free witness $w \in \C S_n$ detecting it.
\end{thmA}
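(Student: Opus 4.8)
The plan is to recast separability and dimension-free witnessing as a pair of dual convex-geometric conditions on the entries of a Gram matrix, and then to separate the entangled state from the dimension-free separable cone by Hahn--Banach.

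First I would fix the pairing. Writing $\mathcal W=\sum_\sigma w_\sigma\sigma$ and $\varrho=\sum_\pi r_\pi\pi$ as elements of $\C S_n$, the expectation of a permutation operator in a product state is governed by the cycle structure: for unit vectors $\vv$ with Gram matrix $G_{ij}=\braket{v_i}{v_j}$ one has
\[
\tr\!\big(\sigma\,(\vvv)\big)=\prod_{(i_1\cdots i_k)\in\sigma}G_{i_1 i_2}G_{i_2 i_3}\cdots G_{i_k i_1}=:m_\sigma(G),
\]
the product running over the cycles of $\sigma$. Thus $\tr(\mathcal W\,\varrho_{\mathrm{prod}})=p_w(G):=\sum_\sigma w_\sigma m_\sigma(G)$ is a polynomial in the entries of $G$. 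Since every unit-diagonal positive semidefinite Hermitian $n\times n$ matrix is the Gram matrix of $n$ vectors in $\C^d$ as soon as $d\ge n$, while Gram matrices of rank $\le d$ are exactly those realizable in local dimension $d$, taking the supremum over $d$ of the positivity constraint shows that $w$ is a dimension-free witness precisely when $p_w(G)\ge 0$ for all unit-diagonal PSD $G$. This is a polynomial positivity statement over a spectrahedral set, which is where real algebraic geometry enters.

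Next I would encode states by their moment vectors $y_\sigma:=\tr(\sigma\varrho)$. Since the permutation operators span the Werner operators (Schur--Weyl), the assignment $\varrho\mapsto(y_\sigma)_\sigma$ is injective on Werner states, and because twirling fixes each $\sigma$ it leaves the moment vector unchanged; hence a Werner state is separable in local dimension $d$ exactly when its moment vector lies in $K_d:=\operatorname{conv}\{m(G):G\text{ unit-diagonal PSD},\ \rk G\le d\}$, whereas a dimension-free witness is, by the previous paragraph, an element of the polar cone $K_\infty^{\circ}$ of the closed convex hull $K_\infty$ of $\{m(G):G\text{ unit-diagonal PSD}\}$. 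With these identifications, the bipolar theorem reduces the claim to a single implication: if the moment vector $y$ of an entangled dimension-$d$ state lay in $K_\infty$, then no dimension-free witness could detect it, so I must show that the moment vector of every entangled dimension-$d$ state lies outside $K_\infty$.

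Equivalently — and this is the crux — I would prove that a genuine quantum state whose moment vector lies in the dimension-free cone $K_\infty$ is already separable in its own local dimension $d$, i.e.\ the reverse of the trivial inclusion $K_d\subseteq K_\infty$ holds when intersected with the moment vectors of honest dimension-$d$ states. This is automatic once $d\ge n$, where $K_d=K_\infty$; the substance is the regime $d<n$, in which high-rank product states can a priori produce moment vectors unreachable in dimension $d$. The approach I expect to work is a twirl-and-compress argument: given a representing measure $y=\int m(G)\,d\mu(G)$ on unit-diagonal PSD matrices, assemble the corresponding separable state in a large dimension $D$ and twirl it to a separable Werner state with the same moments $y$, and then push this realization down to dimension $d$ without destroying separability, exploiting that $\varrho$ is an honest PSD operator on $\cdn$. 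Controlling this compression — phrased as a rank/flat-extension condition on the representing measure, so that membership in $K_\infty$ together with dimension-$d$ positivity forces a measure supported on $\{\rk G\le d\}$ — is the main obstacle; once it is in hand, an entangled $\varrho$ satisfies $y\notin K_\infty$, and Hahn--Banach produces the desired $w\in K_\infty^{\circ}$ with $\tr(\mathcal W\varrho)=\langle w,y\rangle<0$.
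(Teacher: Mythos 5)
Your setup is sound and is essentially the dual formulation of what the paper does: the pairing $\tr(\mathcal W\,\varrho_{\mathrm{prod}})=f_w(G)$ is the paper's Eq.~\eqref{eq:pro}, and your characterization of dimension-free witnesses as polynomials nonnegative on the full elliptope matches condition (ii) of Theorem~\ref{thm:werner-gram}. The problem is that you have correctly isolated the crux --- that the moment vector of an entangled dimension-$d$ Werner state must lie outside $K_\infty$, not merely outside $K_d$ --- and then not proved it. Your ``twirl-and-compress'' sketch, with the compression ``phrased as a rank/flat-extension condition on the representing measure,'' is exactly the step where the argument currently has no content: nothing you have written explains why a representing measure on all unit-diagonal PSD matrices, combined with the mere fact that $\varrho$ lives on $(\C^d)^{\otimes n}$, should be deformable to (or replaceable by) one supported on $\{\rk G\le d\}$. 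Flat-extension techniques give rank bounds on moment matrices of the measure, which is not the same object as the rank of the Gram matrices in its support, so as stated the plan does not close.

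The missing ingredient is representation-theoretic, and it is the paper's Lemma~\ref{l:rk1}: the element $u_d=\sum_{\h(\lambda)>d}\omega_\lambda\in\ker\eta_d$ (a sum of centrally primitive idempotents) has $f_{u_d}\ge 0$ on the whole elliptope $\cZ$ with zero set exactly $\cZ_d=\{\rk\le d\}$; this rests on the identification of $\{f_w: w\in\ker\eta_d\}$ with the ideal of $(d{+}1)$-minors of $Z$. Once you have $u_d$, your crux follows in one line in your own language: the moment vector $y$ of any dimension-$d$ Werner state satisfies $\langle u_d,y\rangle=\tr(\eta_d(u_d)\varrho)=0$ because $\eta_d(u_d)=0$, so if $y=\int m(G)\,d\mu(G)$ then $\int f_{u_d}\,d\mu=0$ forces $\mu$ to be supported on $\cZ_d$, hence $y\in K_d$ and $\varrho$ is separable in dimension $d$. (The paper runs the same idea in the primal: Lemma~\ref{l:rk2} adds $\ve+Mf_{u_d}$ to a witness valid on $\cZ_d$ to make it valid on all of $\cZ$ without changing its pairing with $r$, which is how Theorem~\ref{thm:werner-gram} and then Corollary~\ref{cor:dim-free} are obtained.) Until you supply $u_d$ or an equivalent certificate that the rank-$\le d$ locus is cut out inside $\cZ$ by a nonnegative polynomial coming from $\ker\eta_d$, the proof is incomplete.
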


For the proof of Theorem \ref{thm:dim-free1} see Corollary \ref{cor:dim-free} below.
Thus the set of separable Werner states can be described using hyperplanes of the form
$w = \sum_{\sigma \in S_n} w_\sigma \sigma$ whose $n!$ parameters are entirely independent of the local dimension. A key step in bypassing the dependence on the local dimension  is replacing the usual description \eqref{eq:introeta} of Werner states in terms of the symmetric group with a special weighted version arising from the representation theory of $S_n$.
A characterization of entangled Werner states without referring to the local Hilbert space is given in Theorem \ref{thm:werner-gram}.

The second main contribution of this paper are two SDP hierarchies for finding dimension-free entanglement witnesses for Werner states as in Theorem \ref{thm:dim-free1}. 
Both of them arise from the optimization problem for a given Werner state $\varrho$:
\begin{equation}\label{eq:common}  
\begin{aligned}
\ve^* = &\underset{\ve\in\R,\,w \in\C S_n}{\inf}
&& \ve  \\
& \text{subject to}
&& \tr(\mathcal{W}\varrho)=-1\,, \\
& && \mathcal{W}\quad \text{is represented by }w\,, \\
& && w+\ve \quad \text{is a dimension-free witness}\,.
\end{aligned}
\end{equation}
Then $\varrho$ is entangled if and only if $\ve^*<1$.
The difference between our two hierarchies stems from encoding the last constraint in \eqref{eq:common}.

The first hierarchy~\ref{eq:sdp_f_o_m} encodes positivity of $w+\ve$ on product states with polynomials in {commuting} variables $z_{ij}$ that represent angles between unit vectors.
These variables can be seen as entries of a positive semidefinite Gram matrix with~$1$s on the diagonal,
corresponding to extremal points of the set of separable states.
Using Putinar's Positivstellensatz from real algebraic geometry,
optimization of a polynomial in variables $z_{ij}$ over all Gram matrices with $1$s on the diagonal can then be cast as a sequence of SDPs as in Lasserre's hierarchy \cite{Las01sos}.

\begin{thmA}\label{thm:fsth}
Let $\varrho$ be a Werner state. Then $\varrho$ is entangled if and only if a term in the hierarchy \ref{eq:sdp_f_o_m} returns a value less than 1, in which case it also produces a dimension-free entanglement witness for $\varrho$. 
\end{thmA}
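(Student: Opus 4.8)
The plan is to convert the last constraint in \eqref{eq:common}, that \(w+\ve\) be a dimension-free witness, into the nonnegativity of a single polynomial on the convex body of Gram matrices, and then to certify that nonnegativity by a converging family of sum-of-squares relaxations.

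First I would make the witness condition explicit in terms of inner products. Writing a product state as \(\varrho_{\mathrm{prod}}=\dyad{v_1}\ot\cdots\ot\dyad{v_n}\) with unit vectors \(\ket{v_i}\in\C^d\), and using that \(\sigma\in S_n\) acts by permuting tensor factors, a direct computation gives that \(\tr(\sigma\,\varrho_{\mathrm{prod}})\) factors over the cycles of \(\sigma\) into a product of the inner products \(z_{ij}:=\braket{v_i}{v_j}\). Hence, for Hermitian \(\mathcal{W}=\sum_\sigma w_\sigma\sigma\), the quantity \(\tr\big((\mathcal{W}+\ve\one)\varrho_{\mathrm{prod}}\big)\) equals a polynomial \(q_{w,\ve}(z)=\tr(\mathcal{W}\varrho_{\mathrm{prod}})+\ve\) in the entries \(z_{ij}\), real-valued on the Hermitian locus \(z_{ji}=\overline{z_{ij}}\), \(z_{ii}=1\). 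Since \(n\) unit vectors realize exactly the positive semidefinite Gram matrices with unit diagonal, and any such matrix of size \(n\) is attained once \(d\ge n\), the dimension-free witness condition is equivalent to \(q_{w,\ve}\ge 0\) on the compact set
\begin{equation*}
\mathcal{D}=\{\,Z=Z^*\succeq 0:\ Z_{ii}=1\text{ for all }i\,\}.
\end{equation*}
This is exactly the dimension-free reduction behind Theorem \ref{thm:werner-gram}, and it recasts \eqref{eq:common} as the minimization of \(\ve\) over polynomials \(q_{w,\ve}\) nonnegative on \(\mathcal{D}\), subject to the affine normalization \(\tr(\mathcal{W}\varrho)=-1\).

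Next I would erect the hierarchy from Positivstellensatz certificates. The entrywise bound \(|z_{ij}|\le 1\) (from the \(2\times2\) minors of \(Z\succeq0\)) shows \(\mathcal{D}\) is compact, and the quadratic module generated by the linear matrix inequality \(Z\succeq0\) on the slice \(z_{ii}=1\) is Archimedean. For each truncation degree \(t\) I would replace \(q_{w,\ve}\ge0\) on \(\mathcal{D}\) by the requirement that \(q_{w,\ve}\) admit a degree-\(2t\) representation in this module, in the sense of the matrix-valued Putinar Positivstellensatz \cite{SchererHol2006}; after splitting \(z_{ij}\) into real and imaginary parts this becomes a genuine SDP, namely \ref{eq:sdp_f_o_m}, with optimal value \(\ve^*_t\). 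Because each level imposes a condition sufficient for nonnegativity on \(\mathcal{D}\), the level-\(t\) feasible set is contained in that of \eqref{eq:common}, so \(\ve^*_t\ge\ve^*\) and the sequence is nonincreasing. Convergence \(\ve^*_t\downarrow\ve^*\) follows from Putinar: given \(\delta>0\), pick a normalized \(w\) with \(q_{w,\ve}\ge0\) on \(\mathcal{D}\) and \(\ve<\ve^*+\delta\); then \(q_{w,\ve+\delta}\ge\delta>0\) on \(\mathcal{D}\) is strictly positive, hence certified at some finite level, whence \(\ve^*_t\le\ve+\delta<\ve^*+2\delta\) for large \(t\).

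Finally I would assemble the equivalence. By Theorem \ref{thm:dim-free1} together with the normalization in \eqref{eq:common}, \(\varrho\) is entangled if and only if \(\ve^*<1\). If \(\varrho\) is entangled then \(\ve^*<1\), and by the convergence just shown some \(\ve^*_t<1\); conversely, any level with \(\ve^*_t<1\) yields \(\ve^*\le\ve^*_t<1\), so \(\varrho\) is entangled. Moreover, a feasible point at such a level returns \(w\) together with the certificate that \(w+\ve^*_t\) is a dimension-free witness, and since \(\tr(\mathcal{W}\varrho)=-1\) we get \(\tr\big((\mathcal{W}+\ve^*_t\one)\varrho\big)=\ve^*_t-1<0\), so this witness detects \(\varrho\). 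The step I expect to be most delicate is this convergence: one must confirm that \(\mathcal{D}\) is genuinely Archimedean so that Putinar applies, and cope with the fact that the infimum \(\ve^*\) is typically attained only on the boundary \(\{q_{w,\ve^*}=0\}\), where no finite-level certificate need exist—this is what forces the strictly-positive perturbation above rather than a single certificate at the optimum.
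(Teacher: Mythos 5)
Your proposal follows essentially the same route as the paper: reduce the dimension-free witness condition to nonnegativity of the generalized matrix function $f_w$ on the elliptope $\cZ$ (the content of Theorem \ref{thm:werner-gram} via Eq.~\eqref{eq:pro}), then certify positivity by the matrix Putinar/Scherer--Hol Positivstellensatz, handling non-attainment at the boundary by a strict perturbation—exactly the role of the $+\tfrac12$ in the paper's argument. Your added monotone-convergence statement $\ve^*_t\downarrow\ve^*$ is a harmless strengthening of what the paper proves, and the Archimedean point you flag is correct (e.g.\ $1-|z_{ij}|^2\in Q$ via $2\times 2$ compressions of $Z$), so the argument is sound.
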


The second hierarchy \ref{eq:sdp_tracepoly}
applies the trace polynomial optimization framework
introduced
by the second, third and fourth author~\cite{klep2020optimization}
to the correspondence between positive trace polynomials and Werner state entanglement witnesses by the first author~\cite{huber2020positive}.
Trace polynomials are 
polynomial-like expressions in noncommuting variables $x_1,\dots,x_n$ and traces of their products. 
It turns out that positivity of a trace polynomial over all tracial von Neumann algebras can be characterized with a sum-of-squares certificate \cite[Theorem 4.4]{klep2020optimization}. Since matrices are special cases of tracial von Neumann algebras, we can use sum-of-squares representations of trace polynomials to confirm their positivity on matrices. 
Finally, since Werner state witnesses correspond to trace polynomials positive on tuples of positive semidefinite matrices (\cite[Theorem 16]{huber2020positive}, 
also see Theorem \ref{thm:trace_poly_formulation}),
this leads to the hierarchy \ref{eq:sdp_tracepoly} for entanglement detection.\looseness=-1

\begin{thmA}\label{thm:sndh}
Let $\varrho$ be a Werner state. If a term in the hierarchy \ref{eq:sdp_tracepoly} returns a value less than 1, then $\varrho$ is entangled and a corresponding dimension-free entanglement witness is produced. 
\end{thmA}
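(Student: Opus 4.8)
The plan is to establish Theorem \ref{thm:sndh} as a \emph{soundness} statement: hierarchy \ref{eq:sdp_tracepoly} is an inner relaxation of the optimization problem \eqref{eq:common}, in the sense that every feasible point of a term of the hierarchy is already feasible for \eqref{eq:common}, so the value returned by each term is an upper bound for $\ve^*$. First I would recall from Theorem \ref{thm:trace_poly_formulation} (equivalently \cite[Theorem 16]{huber2020positive}) that an element $w+\ve\in\C S_n$ is a dimension-free witness precisely when the associated trace polynomial is nonnegative on all tuples of positive semidefinite matrices, across all local dimensions $d$. This is exactly the last constraint of \eqref{eq:common}, except that in \ref{eq:sdp_tracepoly} nonnegativity on the domain $\{x_i\succeq 0\}$ is strengthened to the existence of a degree-bounded sum-of-squares certificate relative to the corresponding quadratic module.

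Second, I would invoke the soundness of that certificate. By \cite[Theorem 4.4]{klep2020optimization} a sum-of-squares certificate forces the trace polynomial to be nonnegative not merely on matrices but on \emph{all} tracial von Neumann algebras. Since matrix algebras of every dimension sit inside this class, any $w+\ve$ admitting such a certificate is genuinely nonnegative on all positive semidefinite matrix tuples, hence a bona fide dimension-free witness by Theorem \ref{thm:trace_poly_formulation}. Consequently, at each level $k$ the sum-of-squares-restricted feasible set is contained in the feasible set of \eqref{eq:common}, and therefore the returned value $\ve_k$ satisfies $\ve_k\ge\ve^*$.

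Third, I would combine this with the equivalence recorded after \eqref{eq:common}, namely that $\varrho$ is entangled if and only if $\ve^*<1$. If a term returns a value $\ve_k<1$, then $\ve^*\le\ve_k<1$, so $\varrho$ is entangled. To produce the witness explicitly, take the optimizer $(w,\ve)$ with $\ve<1$: the operator $\mathcal{W}+\ve\one$ represented by $w+\ve$ is nonnegative on every separable state, being a dimension-free witness, while $\tr\big((\mathcal{W}+\ve\one)\varrho\big)=\tr(\mathcal{W}\varrho)+\ve=-1+\ve<0$ because $\tr(\varrho)=1$. Thus $\mathcal{W}+\ve\one$ detects the entanglement of $\varrho$, and this is the witness output by the hierarchy.

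The main point to get right — and the reason this is only a one-way implication, in contrast to the completeness in Theorem \ref{thm:fsth} — is the \emph{direction} of the relaxation. Because the sum-of-squares characterization captures positivity over all tracial von Neumann algebras (including infinite-dimensional $\mathrm{II}_1$ factors) rather than over matrices alone, it is a strictly stronger requirement, so the hierarchy can only over-estimate $\ve^*$ and its limiting value may stay above $1$ even for some entangled $\varrho$ with $\ve^*<1$. This gap is precisely what blocks the converse, but it does not affect the soundness chain above: every certificate produced is valid on all matrix dimensions, so the stated implication holds as written.
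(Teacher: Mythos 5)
Your overall strategy (soundness of an inner relaxation, combined with the observation that a returned value below $1$ forces a negative expectation on $\varrho$) matches the paper's, but there is a genuine gap in the central step. You claim that a sum-of-squares certificate for $\mathfrak{t}_w+\ve$ yields nonnegativity of the trace polynomial $\sum_\sigma w_\sigma T_{\sigma^{-1}}$ (plus $\ve$) on positive semidefinite tuples in \emph{every} dimension, ``since matrix algebras of every dimension sit inside'' the class of tracial von Neumann algebras. This does not follow. The von Neumann algebraic framework works with tracial \emph{states} (normalized, unital traces), while $T_\sigma$ is built from the unnormalized matrix trace; the two are reconciled in \eqref{eq:scaled} by the dimension-dependent factor $n^{N_{\text{cyc}}(\sigma)}$, and the resulting identity $T_w(X)=\mathfrak{t}_w(X)$ of \eqref{eq:mat2vna} holds \emph{only} for projections on $\C^n$. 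For projections on $\C^d$ with $d\neq n$ one instead gets $\mathfrak{t}_w(X)=\sum_\sigma w_\sigma (n/d)^{N_{\text{cyc}}(\sigma)}T_{\sigma^{-1}}(X)$, which is a different trace polynomial; so the certificate does not directly certify that $w+\ve\,\id$ passes the test of Theorem \ref{thm:trace_poly_formulation} in an arbitrary dimension $d$. Relatedly, the additive constant $\ve$ in the normalized world corresponds to $\ve\, T_{\id}(X)=\ve\prod_i\tr(X_i)$ in the unnormalized world, and the paper has to use $\tr(P)\ge 1$ for nonzero projections to transfer it; your proposal does not address this.

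The paper's proof of Proposition \ref{prop:oneway} repairs exactly this: the certificate is used only to conclude that $T_{\widetilde w}$, with $\widetilde w=w+\vartheta\,\id$, is nonnegative on tuples of projections on $\C^n$ (the single dimension where \eqref{eq:mat2vna} applies); multilinearity and conic combinations of projections then extend this to all positive semidefinite tuples on $\C^n$, whence $\eta_n(\widetilde w)$ is a witness for $\mu_n(r)$ by Theorem \ref{thm:trace_poly_formulation}. Dimension-freeness is then \emph{not} obtained from the trace-polynomial certificate at all, but from Theorem \ref{thm:werner-gram} (equivalently Corollary \ref{c:indep} and the Gram-matrix description of product states), which shows that detecting entanglement of $\mu_n(r)$ in dimension $n$ already settles it for $\mu_e(r)$ in every $e\ge d$ and upgrades $\widetilde w$ to a dimension-free witness. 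Your argument would become correct if you inserted this reduction to $d=n$ together with the appeal to Theorem \ref{thm:werner-gram}; as written, the step from the certificate to ``nonnegative on all positive semidefinite matrix tuples in all dimensions'' is unjustified.
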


While the hierarchy \ref{eq:sdp_f_o_m} is complete
since it converges to an entanglement witness for every entangled Werner state, it is not clear whether 
\ref{eq:sdp_tracepoly} detects every entangled Werner state. However, the latter hierarchy's first steps involve much smaller semidefinite constraints than the hierarchy ~\ref{eq:sdp_f_o_m},
which makes it more suitable for concrete calculations. 
As a demonstration, we use \ref{eq:sdp_tracepoly} to produce an exact entanglement witness for a 4-partite Werner state, 
for which the Peres-Horodecki criterion (i.e., a negative partial transpose signals entanglement~\cite{PhysRevLett.77.1413, Horodecki1996})  
fails (Section \ref{sec:exa}).\looseness=-1

\section{Dimension-free entanglement witnesses for Werner states}

In this section we present a parametrization of Werner states with the group algebra of the symmetric group that admits a dimension-free characterization of entanglement. 
Our approach generalizes \cite{PhysRevA.63.042111} where tripartite Werner states were considered.
We start by introducing notions from the representation theory of the symmetric group that are required throughout the paper. Then we build towards Theorem \ref{thm:werner-gram} which relates entanglement of Werner states with a certain system of polynomial inequalities that is independent of the local dimension. As a consequence we prove the existence of dimension-free entanglement witnesses (Corollary \ref{cor:dim-free}).
 
The group algebra $\C S_n$ has a canonical conjugate-linear involution $\dag$ given by inverting group elements, $(\sum_{\sigma\in S_n} a_\sigma\sigma)^\dag=\sum_{\sigma\in S_n} \overline{a_\sigma}\sigma^{-1}$.
Furthermore, there is 
a natural trace
$$\tau:\C S_n\to\C,\qquad \tau(a)=n!a_{\id}$$
where $a_{\id}$ is the coefficient of the identity $\id$ in $a\in\C S_n$. 
Throughout the paper we view $\C S_n$ as a Hilbert space with the scalar product induced by $\tau$; that is, $\frac{1}{\sqrt{n!}}S_n$ is an orthonormal basis of $\C S_n$.
We define the set of {\bf states} as
\begin{equation*}
    \{r \in \C S_n :\, r = a a^\dag,\, a \in \C S_n,\, \tau(r) =1\}
\end{equation*}
The terminology is justified by Lemma \ref{lem:CSn2Hilbert}(2) below.
Note that $r = aa^\dag$ for some $a\in\C S_n$ if and only if $r$ is a positive semidefinite element of the
finite-dimensional $C^*$-algebra $\C S_n$, which is further equivalent to $\Phi(r)\succeq0$ for every $*$-representation $\Phi$ of $\C S_n$.

\bigskip
We now outline the necessary facts from the representation theory of the symmetric group~\cite{fultonharris,Procesi2007LieGroups}.
To each partition $\lambda\vdash n$ is associated an irreducible representation of $S_n$ 
(cf. \cite[Chapter 4]{fultonharris}); let $\chi_\lambda$ be its character. 
Let $\{\omega_\lambda : \lambda  \vdash n\}$ be a complete set of centrally primitive idempotents for $\C S_n$ \cite[Section 3.4]{fultonharris}.
They can be written as
\begin{equation*}
	\omega_\lambda = \frac{\chi_\lambda(\id)}{n!} \sum_{\sigma \in S_n}  \chi_\lambda(\sigma) \sigma^{-1}\,,
\end{equation*}
where $\chi_\lambda(\id)$ is both the multiplicity and the dimension of 
the irreducible representation corresponding to $\lambda$ in $\C S_n$.

The trace $\tau$ can be seen as the linear extension of 
the character of the regular representation of $S_n$. If $\sigma\in S_n$, then the Schur column orthogonality relations \cite[Section 2.2]{fultonharris} imply
\begin{equation}\label{eq:char_reg}
    \tau(\sigma) 
    = \sum_{\lambda \vdash n} \chi_\lambda(\id)\chi_\lambda(\sigma) 
    = 
    \begin{cases}
      \sum_{\lambda  \vdash n} \chi_\lambda(\id)^2 = n! &\text{if } \sigma=\id\,, \\
       0 & \text{otherwise}\,.
    \end{cases}
\end{equation}
Here $\chi_\lambda(\id)$ is both the multiplicity and the dimension of 
an irreducible representation corresponding to $\lambda$ in $\C S_n$. In particular, $\tau(\omega_\lambda) = \chi_\lambda^2(\id)$.

\bigskip

Let $\eta_d$ be the representation of $S_n$ on $(\C^d)^{\otimes n}$ that permutes the tensor factors,
$$\eta_d(\sigma)(\ket{v_1}\otimes\cdots\otimes \ket{v_n})=\ket{v_{\sigma^{-1}(1)}}\otimes\cdots\otimes \ket{v_{\sigma^{-1}(n)}}$$
for $\sigma\in S_n$ and $\vv\in \C^d$. 
Under $\eta_d$, the idempotents $\omega_\lambda$ are mapped to 
the central Young projections $p_\lambda = \eta_d(\omega_\lambda)$. These satisfy
\begin{align*}
 p_\lambda^2 &= p_\lambda = p_\lambda^\dag          \,,\nn\\
 p_\lambda p_\mu &= p_\lambda \delta_{\lambda \mu}  \,,\nn\\      
 \eta_d(\sigma) p_\lambda &= p_\lambda \eta_d(\sigma) \quad \forall \sigma \in S_n\,.
 \end{align*}
Importantly, they form a resolution of the identity
\begin{equation*}
 \sum_{\substack{\lambda \vdash n \\ 
                  \h(\lambda) \leq d}
        }
    p_\lambda = \one \quad \in L(\C^d)\,.
\end{equation*}

By \cite[Proposition 9.3.1]{Procesi2007LieGroups},
\begin{equation}\label{eq:kernel}
\ker \eta_d = \sum_{
	\substack{\lambda \vdash n \\ 
		\h(\lambda) > d}
} \omega_\lambda \cdot \C S_n \,.
\end{equation}
Let
\begin{equation*}
	 J_d = \sum_{
	        \substack{\lambda \vdash n \\ 
	          \h(\lambda) \leq d}
	          } \omega_\lambda \cdot \C S_n \,.
	\end{equation*}
Then $J_d$ and $\ker\eta_d$ are complementary (both as orthogonal subspaces and ideals) in $\C S_n$.
Furthermore, $J_1\subset J_2\subset \cdots \subset J_n=J_{n+1}=\cdots=\C S_n$.
Next consider the map $\mu_d:\C S_n\to L((\C^d)^{\otimes n})$ defined as
\begin{equation*}
    \mu_d(r) = n! \Wg(d,n) \eta_d(r)\,,
\end{equation*}
where 
\begin{align}\label{eq:def_Wg}
\Wg(d,n) &
 = \frac{1}{ n!} \sum_{
        \substack{\lambda \vdash n \\ 
          \h(\lambda) \leq d}
          }
        \frac{\tau(\omega_\lambda)}{\tr(p_\lambda)} 
          p_\lambda 
\end{align}
is the {\bf (Formanek-) Weingarten operator}~\cite{CollinsSniady2006, procesi2020note}. 
The action of $\Wg(d,n)$ scales each isotypic component according to its multiplicity in $\C S_n$ and $L(\cdn)$. 
Note that the restriction of $\mu_d$ to $J_d$ is bijective onto the image of $\eta_d$ since $J_d=(\ker\eta_d)^\perp$.

\smallskip
The definition of $\mu_d$ is motivated by the following properties:
 
\begin{lemma}\
\label{lem:CSn2Hilbert}
\begin{enumerate}[\rm (1)]
 \item
 For all $a\in J_d$ and $b \in \C S_n$ it holds that 
 \begin{equation*}
 \tr(\mu_d(a) \eta_d(b)) = \tau(a b)\,.
 \end{equation*}

 \item Let $r \in J_d$. Then $r$ is a state  if and only if 
 $\mu_d(r)$ is a state in $L(\cdn)$. 
 
 \end{enumerate}
\end{lemma}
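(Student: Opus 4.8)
The plan is to establish the two statements in sequence, using part (1) as a tool for part (2). For part (1), I would first unfold $\mu_d(a)=n!\,\Wg(d,n)\,\eta_d(a)$ and use that $\eta_d$ is a representation to get $\tr(\mu_d(a)\eta_d(b))=n!\,\tr\!\big(\Wg(d,n)\,\eta_d(ab)\big)$. Because $J_d$ is a (two-sided) ideal and $a\in J_d$, the element $c:=ab$ again lies in $J_d$, so its isotypic components indexed by $\h(\lambda)>d$ vanish. Substituting \eqref{eq:def_Wg} and using $p_\lambda=\eta_d(\omega_\lambda)$ with $p_\lambda p_\mu=\delta_{\lambda\mu}p_\lambda$, the trace reduces to $\sum_{\h(\lambda)\le d}\frac{\tau(\omega_\lambda)}{\tr(p_\lambda)}\tr(p_\lambda\eta_d(c))$.

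The crux of part (1) is the pair of identities $\tr(p_\lambda\eta_d(c))=m_\lambda^{(d)}\,\chi_\lambda(c)$ and $\tr(p_\lambda)=m_\lambda^{(d)}\,\chi_\lambda(\id)$, where $m_\lambda^{(d)}$ is the multiplicity of the irreducible $\rho_\lambda$ (with character $\chi_\lambda$) inside $\eta_d$. Both come from the Schur--Weyl decomposition $\cdn=\bigoplus_{\h(\lambda)\le d}V_\lambda\otimes W_\lambda^{(d)}$, on whose $\lambda$-block $\eta_d(c)$ acts as $\rho_\lambda(c)\otimes\id$. Feeding these in, the factor $\tau(\omega_\lambda)=\chi_\lambda^2(\id)$ makes the multiplicities $m_\lambda^{(d)}$ cancel, leaving $\sum_{\h(\lambda)\le d}\chi_\lambda(\id)\chi_\lambda(c)$; since $\rho_\lambda(c)=0$, and hence $\chi_\lambda(c)=0$, for $\h(\lambda)>d$, this sum equals $\sum_{\lambda\vdash n}\chi_\lambda(\id)\chi_\lambda(c)$, which by the linear extension of \eqref{eq:char_reg} is exactly $\tau(c)=\tau(ab)$.

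For part (2) I would treat the trace condition and the positivity condition separately. The normalization falls out of part (1) with $b=\id$: as $\eta_d(\id)=\one$, it yields $\tr(\mu_d(r))=\tau(r)$, so $\tau(r)=1$ if and only if $\tr(\mu_d(r))=1$. The positivity equivalence is the substantive part and I would split it into two links. First, for $r\in J_d$ one has $r\succeq0$ in $\C S_n$ if and only if $\eta_d(r)\succeq0$: the forward implication is clear since $\eta_d$ is a $*$-representation, and the converse holds because every $\rho_\lambda$ with $\h(\lambda)\le d$ occurs in $\eta_d$ while $\rho_\lambda(r)=0$ for $\h(\lambda)>d$, so $\eta_d(r)\succeq0$ forces $\rho_\lambda(r)\succeq0$ for all $\lambda$, hence $r\succeq0$ by the characterization recalled above. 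Second, $\eta_d(r)\succeq0$ if and only if $\mu_d(r)\succeq0$: the operator $W:=n!\,\Wg(d,n)$ is positive definite on $\cdn$ (the $p_\lambda$ with $\h(\lambda)\le d$ resolve the identity and the weights $\tau(\omega_\lambda)/\tr(p_\lambda)$ are strictly positive), and as each $p_\lambda=\eta_d(\omega_\lambda)$ is central in $\eta_d(\C S_n)$, $W$ commutes with $\eta_d(r)$. Thus $\mu_d(r)=W^{1/2}\eta_d(r)\,W^{1/2}$ is a congruence of $\eta_d(r)$ by an invertible operator, which preserves positive semidefiniteness in both directions. Combining the two links with the normalization gives the claim.

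The step I expect to require the most care is the character bookkeeping in part (1): the whole point is that $\mu_d$ is built precisely so the $GL_d$-multiplicities $m_\lambda^{(d)}$ cancel, and I would want to make this cancellation fully explicit so the final value is manifestly the dimension-free quantity $\tau(ab)$. Once that is in place, part (2) is a clean application, the only genuine input being the commutativity and strict positivity of $W$ that turn $\mu_d$ into a positivity-preserving congruence on $J_d$.
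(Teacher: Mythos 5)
Your proof is correct and follows essentially the same route as the paper: the same per-$\lambda$ identity $\frac{\tau(\omega_\lambda)}{\tr(p_\lambda)}\tr(p_\lambda\eta_d(c))=\chi_\lambda(\id)\chi_\lambda(c)$ combined with the regular-character formula \eqref{eq:char_reg} for part (1), and the centrality and positivity of the Weingarten operator for part (2). The only (harmless) differences are that you justify the trace identity by writing out the Schur--Weyl decomposition explicitly, where the paper invokes uniqueness of traces on the central simple blocks $\omega_\lambda\cdot\C S_n$, and that in part (2) you obtain both directions of the positivity equivalence from a single invertible congruence $\mu_d(r)=W^{1/2}\eta_d(r)W^{1/2}$, where the paper argues the converse direction separately via $p_\lambda\eta_d(r)\succeq0$ and the fact that $\eta_d$ restricted to $J_d$ is a $*$-embedding.
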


\begin{proof}
(1) Since $J_d$ is an ideal, we have $ab\in J_d$. Next,
\begin{equation}\label{eq:twotr}
\tau(\omega_\lambda)\cdot\tr(\eta_d(\omega_\lambda c)) 
= \tr(p_\lambda)\chi_\lambda(\id)\cdot \chi_\lambda(c)
\end{equation}
for all $c\in\C S_n$ and $\lambda\vdash n$. Indeed, both sides of \eqref{eq:twotr} restrict to traces on the central simple algebra $\omega_\lambda\cdot \C S_n$. 
As $\eta_d(\omega_\lambda)=p_\lambda$ and $\tau(\omega_\lambda)=\chi_\lambda(\id)^2$, 
\eqref{eq:twotr} holds for $c=\id$.
Since traces of central simple algebras over $\C$ are unique up to a scalar multiple, 
we thus conclude that \eqref{eq:twotr} holds for every $c\in \C S_n$.
Therefore
\begin{align}\label{eq:inner_CS_tensor}
\begin{split}
  \tr(\mu_d(a) \eta_d(b)) \ 
  &= \tr\Big( n! \Wg(d,n) \eta_d(a) \eta_d(b)\Big) \\
    &= \sum_{
        \substack{\lambda \vdash n \\ 
          \h(\lambda) \leq d}
          }
        \frac{\tau(\omega_\lambda)} {\tr(p_\lambda)} 
          \tr(\eta_d(\omega_\lambda a b))  \\
    &= \sum_{
        \substack{\lambda \vdash n \\
          \h(\lambda) \leq d}
          }
        \chi(\id) \chi_\lambda(ab)
    = \tau(ab)\,,
   \end{split}
\end{align}
by \eqref{eq:twotr} and \eqref{eq:char_reg}.

(2)  $(\Rightarrow)$  Suppose $\tau(r)=1$ and $r=aa^\dag$ for some $a\in \C S_n$. Then
$$\mu_d(r)= n! \Wg(d,n) \eta_d(aa^\dag) = n! \Wg(d,n)^{1/2} \eta_d(a)\eta_d(a^\dag) \Wg(d,n)^{1/2} \succeq 0$$
and
$\tr(\mu_d(r)) = \tau(r) = 1$
by \eqref{eq:inner_CS_tensor}, so $\mu_d(r)$ is a state in $L(\cdn)$.

\phantom{(2)} $(\Leftarrow)$ Suppose that $\mu_d(r)$ is a state in $L(\cdn)$. Then $\mu_d(r)\succeq0$ implies $p_\lambda \eta_d(r)\succeq0$ for all $\lambda\vdash n$ with $h(\lambda)\leq d$. Therefore $\eta_d(r)\succeq0$ because $r\in J_d$. Since the restriction of $\eta_d$ to $J_d$ is a $*$-embedding, we have $r=aa^\dag$ for some $a\in J_d$. Finally, $\tau(r)=\tr(\mu_d(r))=1$ by \eqref{eq:inner_CS_tensor}.
\end{proof}

Let $z = (z_{ij} \,:\, 1\leq i<j\leq n)$ be a tuple of $\binom{n}{2}$ complex variables.
Denote by $Z$ the $n \times n$ matrix over $\C[z, \overbar{z}]$
with entries $Z_{ii} = 1$, $Z_{ij} = z_{ij}$ and $Z_{ji} = \overbar{z_{ij}}$ for $i<j$.
Let 
\begin{equation*}
\cZ = \{\a \in \C^{\binom{n}{2}} \,:\; Z(\a)\geq 0\}
\end{equation*} 
be the corresponding bounded spectrahedron, 
also known as the {elliptope}~\cite{Vinzant2014}.
For $d\in\N$ also let
$$\cZ_d = \{\a \in \cZ \,:\; \rk Z(\a)\le d\}.$$
Note that $\cZ_1\subset \cZ_2\subset \cdots \subset \cZ_n=\cZ_{n+1}=\cdots= \cZ$.
Furthermore, $\alpha\in\cZ_d$ if and only if $\alpha_{ij}=\braket{v_i}{v_j}$ for some unit vectors $\vv\in\C^d$.
 
To each $w=\sum_{\sigma\in S_n}w_{\sigma}\sigma \in \C S_n$ we assign the polynomial
\begin{equation}\label{eq:genmatfun}
f_w = \sum_{\sigma\in S_n}w_\sigma\prod_{i=1}^n z_{i\sigma(i)} \in \C[z,\overbar{z}]\,
\end{equation}
where $z_{ii}$ denotes $1$ and $z_{ji}$ for $i<j$ denotes $\overbar{z_{ij}}$.
These polynomials are also known as generalized matrix functions \cite{marcusminc}.
If 
$\alpha\in\cZ$ is given as $\alpha_{ij}=\braket{v_i}{v_j}$ for
unit vectors $\vv \in \C^d$, then 
\begin{equation}
\label{eq:pro}
f_w(\alpha) 
= \sum_{\sigma\in S_n}w_\sigma\prod_{i=1}^n \braket{v_i}{v_{\sigma(i)}}
=\tr\big(\eta_d(w) (\vvv)\big)
\end{equation}
by \cite[Theorem 9.6.1]{Procesi2007LieGroups}.

We require two technical lemmas.

\begin{lemma}\label{l:rk1}
Let
$$u_d = \sum_{\substack{\lambda \vdash n \\ \h(\lambda) > d}} 
\omega_\lambda \in \ker \eta_d\,.$$
Then $f_{u_d}$ is nonnegative on $\cZ$ and $\cZ_d=\cZ\cap\{f_{u_d}=0\}$.
\end{lemma}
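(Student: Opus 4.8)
The plan is to interpret $f_{u_d}(\alpha)$ as a squared norm inside $(\C^D)^{\otimes n}$ and then read off both claims from the isotypic decomposition of $S_n$. First I would record that $u_d$ is a self-adjoint central idempotent: each $\omega_\lambda$ is a self-adjoint idempotent (the characters $\chi_\lambda$ are real and satisfy $\chi_\lambda(\sigma)=\chi_\lambda(\sigma^{-1})$, so $\omega_\lambda^\dag=\omega_\lambda$), and distinct $\omega_\lambda$ are mutually orthogonal, whence $u_d=u_d^2=u_d^\dag$. Consequently, for any local dimension $D$ the operator $\eta_D(u_d)=\sum_{d<\h(\lambda)\le D}\eta_D(\omega_\lambda)$ is a sum of mutually orthogonal projections, hence an orthogonal projection, which I denote $P^{(D)}_{>d}$; the summands with $\h(\lambda)>D$ vanish.

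For nonnegativity, fix $\alpha\in\cZ$ and write $\alpha_{ij}=\braket{v_i}{v_j}$ for unit vectors $\vv\in\C^D$ with $D=n$. Setting $\ket{V}=\ket{v_1}\ot\cdots\ot\ket{v_n}$, formula \eqref{eq:pro} applied with local dimension $D$ gives $f_{u_d}(\alpha)=\tr(\eta_D(u_d)\dyad{V})=\langle V|P^{(D)}_{>d}|V\rangle=\|P^{(D)}_{>d}\ket{V}\|^2\ge 0$. For the inclusion $\cZ_d\subseteq\cZ\cap\{f_{u_d}=0\}$, if $\alpha\in\cZ_d$ then the $v_i$ may be taken in $\C^d$, so \eqref{eq:pro} with local dimension $d$ yields $f_{u_d}(\alpha)=\langle V|\eta_d(u_d)|V\rangle=0$, since $u_d\in\ker\eta_d$ by \eqref{eq:kernel}.

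The reverse inclusion is the crux. Given $\alpha\in\cZ$ with $f_{u_d}(\alpha)=0$, I set $r=\rk Z(\alpha)$ and realize $\alpha_{ij}=\braket{v_i}{v_j}$ with $\vv$ spanning $\C^r$. The nonnegativity computation then forces $\|P^{(r)}_{>d}\ket{V}\|^2=f_{u_d}(\alpha)=0$, so $\ket{V}$ lies entirely in the height-$\le d$ isotypic subspace of $(\C^r)^{\otimes n}$. I would prove $r\le d$ by contraposition: assuming $r\ge d+1$, choose $d+1$ linearly independent vectors among the $v_i$ and let $\phi$ be the image of $\ket{V}$ under the antisymmetrizer $A$ over those $d+1$ tensor factors. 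As $A$ is a self-adjoint projection and the chosen vectors are independent, $\phi\ne0$ and $\langle V|A|V\rangle=\langle\phi|V\rangle=\|\phi\|^2>0$.

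The main obstacle is the one genuinely representation-theoretic step, namely that $\phi$ lies in the range of $P^{(r)}_{>d}$, i.e. $P^{(r)}_{>d}\phi=\phi$. This follows from branching: a vector antisymmetric under the copy of $S_{d+1}$ permuting those factors can occur only in irreducibles $V_\lambda$ whose restriction to $S_{d+1}$ contains the sign representation, and by Pieri's rule this forces $\lambda$ to carry a vertical strip of length $d+1$, hence $\h(\lambda)\ge d+1$. With this in hand, $\C\phi$ sits inside the range of the projection $P^{(r)}_{>d}$, so $P^{(r)}_{>d}$ dominates the rank-one projection onto $\C\phi$, giving $f_{u_d}(\alpha)=\langle V|P^{(r)}_{>d}|V\rangle\ge|\langle\phi|V\rangle|^2/\|\phi\|^2=\|\phi\|^2>0$, a contradiction. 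Therefore $r\le d$, that is $\alpha\in\cZ_d$, which completes the proof.
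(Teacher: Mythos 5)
Your proof is correct, and at the crucial step it takes a genuinely different route from the paper's. The two easy parts agree in spirit: nonnegativity via $f_{u_d}(\alpha)=\langle V|P^{(D)}_{>d}|V\rangle=\|P^{(D)}_{>d}\ket{V}\|^2\ge 0$, and $\cZ_d\subseteq\cZ\cap\{f_{u_d}=0\}$ from $u_d\in\ker\eta_d$. For the reverse inclusion, however, the paper argues through invariant theory: it uses that, for positive semidefinite matrices, $\rk Z(\alpha)\le d$ is equivalent to the vanishing of all principal $(d+1)$-minors, invokes the second fundamental theorem of invariant theory \cite[Section 11.6.1]{Procesi2007LieGroups} to identify $\{f_w\colon w\in\ker\eta_d\}$ with the intersection of the determinantal ideal and the span of generalized matrix functions---so that $\cZ_d$ is exactly the common zero set of $\{f_w\colon w\in\ker\eta_d\}$---and then collapses this whole family to the single element $u_d$ by a Cauchy--Schwarz argument for projections. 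You instead prove the contrapositive directly on the Hilbert space: if $r=\rk Z(\alpha)>d$, antisymmetrizing $\ket{V}$ over $d+1$ tensor slots holding linearly independent vectors produces a nonzero $\phi$ which lies in the range of $P^{(r)}_{>d}$, whence $f_{u_d}(\alpha)\ge\|\phi\|^2>0$. Your representation-theoretic step is sound: a sign vector for the embedded $S_{d+1}$ can occur in $V_\lambda$ only if $\lambda$ is obtained from some $\nu\vdash n-d-1$ by adding a vertical strip of $d+1$ boxes (dual Pieri rule), which puts boxes in $d+1$ distinct rows and forces $\h(\lambda)\ge d+1$; equivalently, one could note that any $V_\lambda$ with $\h(\lambda)\le d$ occurs in $(\C^d)^{\otimes n}$, where every antisymmetrizer over $d+1$ factors vanishes. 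The trade-off: the paper's proof is shorter but leans on heavier machinery, and as a byproduct establishes the stronger statement that the full family $\{f_w\colon w\in\ker\eta_d\}$ cuts out $\cZ_d$; yours is self-contained modulo the branching rule and yields a quantitative conclusion, namely an explicit positive lower bound $\|\phi\|^2$ for $f_{u_d}(\alpha)$ whenever $\rk Z(\alpha)>d$.
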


\begin{proof}
Let $\cM$ be the set of all $(d+1)$-minors of $Z$, 
and let $\cP$ be the set of all principal $(d+1)$-minors of $Z$.
Observe that $\cP\subseteq \{f_w\colon w\in \C S_n\}$, 
and $\alpha\in\cZ_d$ if and only if $p(\alpha)=0$ for all $p\in\cP$.
On the other hand, if $\cI$ is the ideal in $\C[z,\overline{z}]$ generated by $\cM$, then 
$\{f_w\colon w\in\ker\eta_d \}=\cI\cap \{f_w\colon w\in\C S_n \}$ 
by \cite[Section 11.6.1]{Procesi2007LieGroups}.
Therefore $\alpha\in\cZ_d$ if and only if $f_w(\alpha)=0$ for all $w\in\ker \eta_d$. 

Let $\vv\in\C^n$ be arbitrary unit vectors, and denote $V=\vvv \in L(\cdn)$.
Since $V$ and $\eta_n(\omega_\lambda)$ are projections, we have
$$ \tr(\eta_n(\omega_\lambda)V)=0 \implies \tr(\eta_n(a\omega_\lambda)V)=0$$
for every $a\in \C S_n$ by the Cauchy-Schwarz inequality.
Furthermore, since the projections $\eta_n(\omega_\lambda)$ with $\h(\lambda)>d$ are orthogonal and generate $\ker\eta_d$ as a left ideal, we have $\tr(\eta_n(u_d)V)\ge0$ and
\begin{align*}
&\tr(\eta_n(w)V)=0 \quad \forall w\in \ker\eta_d \\ 
\iff & \tr(\eta_n(\omega_\lambda)V)=0 \quad \forall \h(\lambda)>d \\
\iff & \tr(\eta_n(u_d)V)=0 \,.
\end{align*}

Finally, since every $\alpha\in \cZ$ is of the form $\alpha_{ij}=\braket{v_i}{v_j}$ for some unit vectors $\vv\in\C^n$, the preceding two paragraphs and \eqref{eq:pro} imply that 
$f_{u_d}$ is nonnegative on $\cZ$,
and $\alpha\in\cZ_d$ if and only if $f_{u_d}(\alpha)=0$.
\end{proof}

\begin{lemma}\label{l:rk2}
Suppose that $p\in\C[z,\overline{z}]$ is nonnegative on $\cZ_d$, and let $\ve>0$. Then there is $u=u^\dag\in\ker \eta_d$ such that $p+\ve+f_u$ is nonnegative on $\cZ$.
\end{lemma}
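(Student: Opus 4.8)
The plan is to build the required $u$ as a positive scalar multiple of the element $u_d=u_d^\dag\in\ker\eta_d$ furnished by Lemma \ref{l:rk1}. Recall that $f_{u_d}$ is nonnegative on the compact elliptope $\cZ$ and that $\cZ_d=\cZ\cap\{f_{u_d}=0\}$, so $f_{u_d}$ is an explicit nonnegative function vanishing exactly on the rank-$\le d$ locus. Since the assignment $w\mapsto f_w$ is linear, for $u=t\,u_d$ with $t>0$ we have $f_u=t\,f_{u_d}$, and $u$ is automatically self-adjoint and lies in $\ker\eta_d$ because $u_d$ is. Thus the whole problem reduces to producing a single $t>0$ for which $p+\ve+t\,f_{u_d}\ge 0$ on all of $\cZ$.

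To find such $t$ I would isolate the region where $p+\ve$ can fail to be nonnegative, namely $K=\{\alpha\in\cZ:\ p(\alpha)\le -\ve\}$. This set is closed in $\cZ$, hence compact, and the hypothesis that $p\ge 0$ on $\cZ_d$ guarantees that $K$ is disjoint from $\cZ_d$, the zero set of $f_{u_d}$ inside $\cZ$. Continuity and compactness then give a positive lower bound $\delta>0$ for $f_{u_d}$ on $K$ and an upper bound $M$ for $-p-\ve$ on $K$; taking $t=M/\delta$ makes $p+\ve+t\,f_{u_d}\ge 0$ on $K$, while off $K$ the inequality holds because $p+\ve>0$ and $f_{u_d}\ge 0$. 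Setting $u=t\,u_d$ would then complete the argument (the degenerate case $K=\emptyset$ being handled by $u=0$).

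The argument is short once Lemma \ref{l:rk1} is available, and the only genuinely delicate point is the passage from ``positive on $\cZ_d$'' to ``nonnegative on $\cZ$'': the key is that one does not need a full Putinar-type certificate with several multipliers here, because Lemma \ref{l:rk1} already hands us a single nonnegative function $f_{u_d}$ cutting out $\cZ_d$ inside the compact set $\cZ$, and a uniform scaling of it suffices. The other small thing to verify is the self-adjointness of $u_d$, which follows since each centrally primitive idempotent $\omega_\lambda$ is a self-adjoint projection in the $C^*$-algebra $\C S_n$; this is what secures $u=u^\dag$ rather than merely $u\in\ker\eta_d$.
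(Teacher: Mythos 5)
Your proof is correct and follows essentially the same route as the paper: scale the element $u_d$ from Lemma \ref{l:rk1} by a constant obtained from a compactness argument, using that $f_{u_d}$ is strictly positive on any compact subset of $\cZ$ disjoint from $\cZ_d$. The only cosmetic difference is that you bound things on the bad set $K=\{\alpha\in\cZ: p(\alpha)\le-\ve\}$ whereas the paper bounds them on the complement of an open neighborhood of $\cZ_d$; both yield the same conclusion.
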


\begin{proof}
By Lemma \ref{l:rk1} we have $f_{u_d}(\alpha)>0$ for every $\alpha\in \cZ\setminus\cZ_d$. 
Since $p+\ve$ is positive on $\cZ_d$, it is also positive on some Euclidean open subset $U\subset\cZ$ that contains $\cZ_d$.
Since $\cZ\setminus U$ is compact, there exists $M>0$ such that
$$M\cdot \min_{\alpha\in \cZ\setminus U} f_{u_d}(\alpha)\ge -\min_{\alpha\in \cZ\setminus U}(p(\alpha)+\ve).$$
Then $Mu_d\in\ker\eta_d$ and $p+\ve+ f_{Mu_d}=p+\ve+M f_{u_d}$ is nonnegative on $\cZ$.
\end{proof}

We are now ready to treat entanglement of Werner states in a dimension-independent manner.

\begin{theorem}\label{thm:werner-gram}

Given a state $r\in J_d$, the following are equivalent:
\begin{enumerate}[\rm (i)]
\item $\mu_d(r)$ is entangled;

\item there is $w=w^\dag\in\C S_n$ such that
 \begin{align*}
  f_{w}(\alpha) &\geq 0 \quad \forall \alpha\in \cZ\,,\\
  \tau(w r) &< 0.
 \end{align*}
\end{enumerate}
 \end{theorem}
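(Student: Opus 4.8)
The plan is to prove $(ii)\Rightarrow(i)$ directly, and to obtain $(i)\Rightarrow(ii)$ by first producing a witness of the correct symmetric shape and then lifting its positivity from $\cZ_d$ to the full elliptope $\cZ$. Throughout I will translate between operators and group-algebra data via \eqref{eq:pro} and Lemma \ref{lem:CSn2Hilbert}.

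For $(ii)\Rightarrow(i)$, set $\mathcal{W}=\eta_d(w)$, which is Hermitian since $w=w^\dag$. Because $\cZ_d\subseteq\cZ$, the hypothesis $f_w\geq0$ on $\cZ$ restricts to $f_w\geq0$ on $\cZ_d$; reading \eqref{eq:pro} backwards then gives $\tr(\mathcal{W}\,\vvv)=f_w(\alpha)\geq0$ for all unit vectors $\vv\in\C^d$, where $\alpha=(\braket{v_i}{v_j})_{i<j}\in\cZ_d$. Since every separable state is a convex combination of such pure product states, $\tr(\mathcal{W}\varrho)\geq0$ for all separable $\varrho$. On the other hand, Lemma \ref{lem:CSn2Hilbert}(1) with $a=r\in J_d$ and $b=w$ yields $\tr(\mathcal{W}\mu_d(r))=\tau(wr)<0$, and $\mu_d(r)$ is a state by Lemma \ref{lem:CSn2Hilbert}(2). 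Thus $\mathcal{W}$ is a witness detecting $\mu_d(r)$, so $\mu_d(r)$ is entangled.

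For $(i)\Rightarrow(ii)$, I would first manufacture a symmetric witness. As $\SEP(d,n)$ is compact, convex, and does not contain the state $\mu_d(r)$, the separating hyperplane theorem provides a Hermitian witness $\mathcal{W}'$ (after subtracting a suitable multiple of $\one=\eta_d(\id)$ to push the separating threshold to $0$). Twirling, $\overbar{\mathcal{W}}=\int_{\mathcal{U}_d}U^{\otimes n}\mathcal{W}'(U^\dag)^{\otimes n}\,dU$, keeps it a witness, since $\SEP(d,n)$ is stable under the diagonal conjugation and $\mu_d(r)$ is $\mathcal{U}_d$-invariant, and renders it $\mathcal{U}_d$-invariant. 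By Schur--Weyl duality $\overbar{\mathcal{W}}=\eta_d(w_0)$ for some $w_0\in\C S_n$; replacing $w_0$ by $\tfrac12(w_0+w_0^\dag)$ I may take $w=w^\dag$ with $\eta_d(w)=\overbar{\mathcal{W}}$. Evaluating on pure product states as the $v_i$ range over all unit vectors of $\C^d$ gives $f_w\geq0$ on $\cZ_d$ via \eqref{eq:pro}, while Lemma \ref{lem:CSn2Hilbert}(1) gives $\tau(wr)=\tr(\overbar{\mathcal{W}}\mu_d(r))<0$.

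The main obstacle is exactly the gap between positivity on $\cZ_d$ and on all of $\cZ$: the witness above is dimension-dependent, whereas $(ii)$ demands a dimension-free certificate. I would close this gap with Lemma \ref{l:rk2} applied to $p=f_w$ and a small $\ve>0$, producing $u=u^\dag\in\ker\eta_d$ with $f_w+\ve+f_u\geq0$ on $\cZ$. Since the constant $\ve$ equals $f_{\ve\,\id}$ and $w\mapsto f_w$ is linear, set $w'=w+\ve\,\id+u=(w')^\dag$, so $f_{w'}=f_w+\ve+f_u\geq0$ on $\cZ$. It remains to verify $\tau(w'r)<0$, and here the crucial point is that $u\in\ker\eta_d$ and $r\in J_d$ lie in complementary two-sided ideals, whence $ur\in J_d\cap\ker\eta_d=\{0\}$ and $\tau(ur)=0$. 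Combined with $\tau(\id\,r)=\tau(r)=1$ this gives $\tau(w'r)=\tau(wr)+\ve$, so choosing $\ve<-\tau(wr)$ preserves $\tau(w'r)<0$, and $w'$ realizes $(ii)$.
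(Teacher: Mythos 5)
Your proposal is correct and follows essentially the same route as the paper: the same use of \eqref{eq:pro} and Lemma \ref{lem:CSn2Hilbert} for (ii)$\Rightarrow$(i), and for (i)$\Rightarrow$(ii) the same reduction to an invariant witness $w_0=w_0^\dag$ with $f_{w_0}\ge 0$ on $\cZ_d$ followed by Lemma \ref{l:rk2} to pass to all of $\cZ$ via a correction $\ve\,\id+u$ with $u\in\ker\eta_d$. You merely spell out two steps the paper leaves implicit (the twirling argument producing the symmetric witness, and the verification that $\tau(ur)=0$ because $\ker\eta_d$ and $J_d$ are complementary ideals), both of which are handled correctly.
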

 
 \begin{proof}
 (ii)$\Rightarrow$(i) By Lemma \ref{lem:CSn2Hilbert}(1) we have $\tr(\eta_d(w)\mu_d(r))=\tau(wr)<0$, and by \eqref{eq:pro} we have
 $$\tr\big(\eta_d(w)(\vvv)\big)\ge0$$
 for all unit vectors $\vv\in\C^N$, and $N\in\N$. Since every separable state is a conic combination of operators of the form $\vvv$, we conclude that $\tr\big(\eta_N(w)\varrho\big)\ge0$ for all $\varrho\in\SEP(N,n)$ and $N\in\N$. In particular, $\eta_d(w)$ is an entanglement witness for $\mu_d(r)$.
 
 (i)$\Rightarrow$(ii) Since $\mu_d(r)$ is entangled, there exists $w_0=w_0^\dag\in \C S_n$ such that $\eta_d(w_0)$ is an entanglement witness for $\mu_d(r)$. Therefore $\tau(w_0r)=\tr(\eta_d(w_0)\mu_d(r))<0$ and $f_{w_0}$ is nonnegative $\cZ_d$. 
 Let $\ve=-\frac12\tau(wr)>0$.
 By Lemma \ref{l:rk2} 
 there exists $u\in\ker\eta_d$ such that
 $$f_{w_0}+\ve+f_u=f_{w_0+\ve\id+u}$$
 is nonnegative on $\cZ$. Thus $w=w_0+\ve\id+u$ satisfies $\tau(wr)=\frac12\tau(w_0r)<0$ and $f_w(\alpha)\ge0$ for all $\alpha\in\cZ$.
\end{proof}

\begin{corollary}\label{c:indep}
Let $r\in J_d$ and $d<e$. Then:
\begin{enumerate}[\rm (1)]
	\item $\mu_d(r)$ is a state if and only if $\mu_{e}(r)$ is a state;
	\item $\mu_d(r)$ is entangled if and only if $\mu_{e}(r)$ is entangled.
\end{enumerate}
\end{corollary}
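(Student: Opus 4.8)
The plan is to leverage the dimension-free nature of the characterizations in Lemma \ref{lem:CSn2Hilbert}(2) and Theorem \ref{thm:werner-gram}. The starting observation is that the chain $J_1\subset J_2\subset\cdots$ gives $J_d\subseteq J_e$ whenever $d<e$, so $r\in J_d$ implies $r\in J_e$. Consequently both $\mu_d(r)$ and $\mu_e(r)$ are defined, and the hypotheses of Lemma \ref{lem:CSn2Hilbert} and Theorem \ref{thm:werner-gram} are met for the index $d$ as well as for the index $e$.

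For part (1) I would apply Lemma \ref{lem:CSn2Hilbert}(2) twice. Since $r\in J_d$, the lemma gives that $r$ is a state in $\C S_n$ if and only if $\mu_d(r)$ is a state in $L(\cdn)$. Since also $r\in J_e$, the same lemma with $e$ in place of $d$ gives that $r$ is a state if and only if $\mu_e(r)$ is a state. Chaining the two equivalences through the common middle condition ``$r$ is a state'' yields that $\mu_d(r)$ is a state if and only if $\mu_e(r)$ is a state.

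For part (2) I would first dispose of the degenerate case: if $r$ is not a state, then by part (1) neither $\mu_d(r)$ nor $\mu_e(r)$ is a state, hence neither is entangled, and the equivalence holds trivially. Otherwise $r$ is a state, and I would invoke Theorem \ref{thm:werner-gram} for both indices. For the index $d$, entanglement of $\mu_d(r)$ is equivalent to the existence of some $w=w^\dag\in\C S_n$ with $f_w\ge0$ on all of $\cZ$ and $\tau(wr)<0$; for the index $e$ the equivalent condition on $\mu_e(r)$ is the very same statement. The crucial point is that condition (ii) of Theorem \ref{thm:werner-gram} refers only to $r$ and to the full elliptope $\cZ$, with no mention of the dimension. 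Hence the two entanglement criteria coincide verbatim, so $\mu_d(r)$ is entangled if and only if $\mu_e(r)$ is entangled.

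Once the inclusion $J_d\subseteq J_e$ is recorded, the argument is essentially mechanical, and the entire content of the statement is carried by the dimension-independence already built into Theorem \ref{thm:werner-gram}; I do not anticipate any genuine obstacle. The only spot requiring a moment's care is the degenerate case in part (2), where one must recall that entanglement is a property reserved for states, so that a non-state is vacuously not entangled at either dimension.
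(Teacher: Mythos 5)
Your proposal is correct and follows the paper's own argument exactly: record $J_d\subseteq J_e$, apply Lemma \ref{lem:CSn2Hilbert}(2) at both indices for part (1), and observe that condition (ii) of Theorem \ref{thm:werner-gram} is dimension-independent for part (2). The only addition is your explicit handling of the non-state case in (2), which the paper leaves implicit; this is a harmless and reasonable bit of extra care.
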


\begin{proof}
By definition we have $J_d\subseteq J_{e}$. Then (1) holds by Lemma \ref{lem:CSn2Hilbert} and (2) holds by Theorem \ref{thm:werner-gram}.
\end{proof}

\begin{remark}\rm
The assumption $r\in J_d$ in Corollary \ref{c:indep} is necessary; if $d<n$ then there exists $s\in  J_{e}\setminus J_d$, and so $\mu_d(ss^\dag)=0\succeq0$ and $\mu_{e}(ss^\dag)\not\succeq0$.
Furthermore, the direct analog of Corollary \ref{c:indep} fails for $\eta_d$ 
(which is a more conventional parametrization of Werner states than $\mu_d$),
as already the maximally mixed state fails to remain normalized.
Actually, 
the inadequacy of using $\eta_d$ for studying entanglement in a dimension-free way
stretches beyond normalization. For example, if $r=\id-\frac12 (12) \in \C S_2$, then $\frac{1}{\tr(\eta_2(r))}\eta_2(r)$ is a separable state and  $\frac{1}{\tr(\eta_3(r))}\eta_3(r)$ is an entangled state \cite{werner89}.
\end{remark}

An witness $w=w^\dag\in\C S_n$ is called {\bf dimension-free} if
$\tr(\eta_d(w)\varrho)\ge0$ for all $\varrho \in\SEP(d,n)$ and all $d\in\N$.
Another important consequence of Theorem \ref{thm:werner-gram} is the existence of dimension-free witnesses.

\begin{corollary}\label{cor:dim-free}
For all $d,n$ and 
every entangled Werner state $\varrho \in L(\cdn)$ there exists a 
dimension-free witness $w \in \C S_n$ detecting it.
\end{corollary}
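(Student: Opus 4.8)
The plan is to derive Corollary \ref{cor:dim-free} directly from Theorem \ref{thm:werner-gram}, exploiting the fact that the witness condition (ii) of that theorem is already phrased in a dimension-free way. The main point I would make is that the parametrization $\mu_d$ identifies every entangled Werner state on $(\C^d)^{\otimes n}$ with an element $r\in J_d$ that is a state in the sense of $\C S_n$, and that once entanglement is detected by some $w=w^\dag\in\C S_n$ with $f_w\geq0$ on all of $\cZ$ (not merely on $\cZ_d$), the very same $w$ witnesses entanglement in every local dimension.

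First I would reduce the statement to the $\mu_d$ picture. A given entangled Werner state $\varrho\in L(\cdn)$ is, by Schur--Weyl duality, of the form $\eta_d(s)$ for some $s\in\C S_n$; replacing $s$ by its projection onto $J_d=(\ker\eta_d)^\perp$ does not change $\eta_d(s)$, so we may take $s\in J_d$. Since $\mu_d$ differs from $\eta_d$ only by the invertible Weingarten scaling on the image, $\varrho$ is proportional to $\mu_d(r)$ for a suitable state $r\in J_d$, and by Lemma \ref{lem:CSn2Hilbert}(2) this $r$ is genuinely a state in $\C S_n$. Crucially, $\varrho$ and $\mu_d(r)$ are positive scalar multiples of one another, so one is entangled exactly when the other is; hence $\mu_d(r)$ is entangled.

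Next I would invoke Theorem \ref{thm:werner-gram}: since $\mu_d(r)$ is entangled, implication (i)$\Rightarrow$(ii) yields a self-adjoint $w=w^\dag\in\C S_n$ with $f_w(\alpha)\geq0$ for \emph{all} $\alpha\in\cZ$ and $\tau(wr)<0$. The whole force of the corollary rests on the fact that $f_w$ is nonnegative on the full elliptope $\cZ$ rather than only on $\cZ_d$; this is precisely what Lemma \ref{l:rk2} arranged inside the proof of the theorem. I would then recall the geometric meaning of $\cZ$: every $\alpha\in\cZ$ arises as $\alpha_{ij}=\braket{v_i}{v_j}$ for unit vectors $\vv\in\C^N$ with $N$ arbitrarily large, and by \eqref{eq:pro} we have $f_w(\alpha)=\tr\big(\eta_N(w)(\vvv)\big)$. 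Thus $\tr\big(\eta_N(w)(\vvv)\big)\geq0$ for all unit vectors in every dimension $N$, and since an arbitrary separable state is a conic combination of rank-one product projectors $\vvv$, linearity gives $\tr\big(\eta_N(w)\varrho'\big)\geq0$ for all $\varrho'\in\SEP(N,n)$ and all $N\in\N$. This is exactly the defining property of a dimension-free witness.

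Finally, I would close the loop by verifying that this dimension-free $w$ actually detects the original $\varrho$: by Lemma \ref{lem:CSn2Hilbert}(1), $\tr(\eta_d(w)\mu_d(r))=\tau(wr)<0$, and since $\varrho$ is a positive multiple of $\mu_d(r)$ we get $\tr(\eta_d(w)\varrho)<0$ as well. I do not expect any real obstacle here, since the hard analytic work has already been discharged by Lemmas \ref{l:rk1} and \ref{l:rk2} and by Theorem \ref{thm:werner-gram}; the only point requiring care is the identification of $\varrho$ with a scalar multiple of $\mu_d(r)$ for an honest state $r\in J_d$, so that the entanglement of $\varrho$ transfers to $\mu_d(r)$ and Theorem \ref{thm:werner-gram} becomes applicable.
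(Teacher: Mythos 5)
Your proposal is correct and follows essentially the same route as the paper: the paper's proof of Corollary \ref{cor:dim-free} is a one-line appeal to Theorem \ref{thm:werner-gram}(ii) together with the argument inside the proof of (ii)$\Rightarrow$(i), which is exactly what you spell out in more detail. The only cosmetic remark is that since $\mu_d$ preserves the trace (Lemma \ref{lem:CSn2Hilbert}), the state $r\in J_d$ satisfies $\mu_d(r)=\varrho$ exactly rather than merely up to a positive scalar, though your proportionality argument is harmless.
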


\begin{proof}
If a state $\varrho =\mu_d(r)$ is entangled, then $w$ from Theorem \ref{thm:werner-gram}(ii) is a dimension-free entanglement witness for $\varrho$, which follows from the proof of (ii)$\Rightarrow$(i).
\end{proof}

\begin{remark}\label{rem:para}\rm
Theorem \ref{thm:werner-gram} shows that describing Werner states in $L(\cdn)$ with $J_d$ via $\mu_d$ 
reveal the dimension-free nature of entanglement.
While the map $\mu_d$ is defined using the Weingarten operator and is of a rather representation-theoretic nature, 
its unique preimages in $J_d$ can be computed in a very elementary way if one has access to the more common map $\eta_d$. 
Suppose $A\in L(\cdn)$ is invariant under the diagonal conjugate action of $\mathcal{U}_d$. 
There is a unique $a=\sum_{\pi\in S_n} a_\pi \pi \in J_d$ 
such that $A=\mu_d(a)$. By Lemma \ref{lem:CSn2Hilbert}(1), the coefficients of $a$ are given by
$$a_\sigma 
= \frac{1}{n!}\tau(a\sigma^{-1}) 
= \frac{1}{n!}\tr\big(\mu_d(a)\eta_d(\sigma^{-1})\big)
=\frac{1}{n!}\tr\big(A\eta_d(\sigma)^\dag\big)$$
for $\sigma\in S_n$.

Alternatively, if say $\varrho = \eta_d(ss^\dag)/\tr(\eta_d(ss^\dag))$ with $s \in \C S_n$ is given, 
then $r$ in $\varrho = \mu_d(r)$  is proportional to
\begin{equation*}
    \widetilde{\Wg}(d,n)^{-1} \Big( \sum_{
	\substack{\lambda \vdash n \\ 
	\h(\lambda) \leq d}
	} \omega_\lambda  \Big) ss^\dag
\end{equation*}
with an overall normalization such that the coefficient of $\id$ is $1/n!$,
and where $\widetilde{\Wg}(d,n)^{-1}$ is the inverse of the analog of $\Wg(d,n)$ in $\C S_n$,
\begin{equation*}
 \widetilde{\Wg}(d,n)^{-1} 
 = n! \sum_{
        \substack{\lambda \vdash n \\ 
          \h(\lambda) \leq d}
          }
        \frac{\tr(p_\lambda)} {\tau(\omega_\lambda)} 
          \omega_\lambda  \,.
\end{equation*}
\end{remark}

\section{Entanglement witnesses via commutative polynomial optimization}
With the help of Theorem~\ref{thm:werner-gram}
we now show how  semidefinite programming allows us to find entanglement witnesses for Werner states.
The key idea is that
finding entanglement witnesses of this type can be formulated 
as optimizing a multilinear polynomial over a compact semialgebraic set.
We recall the matrix version of Putinar's Positivstellensatz \cite{putinar} 
from real algebraic geometry
in a form suitable for our application.

\begin{corollary}[{Complex version of the matrix Positivstellensatz~\cite[Corollary~1]{SchererHol2006}}]
\label{cor:compmatpss}
A polynomial $q\in \C[z,\overbar{z}]$ is nonnegative on $\mathcal{Z}$
if and only if $q + \varepsilon \in Q$ for every $\varepsilon > 0$,
where
\begin{equation*}
 Q = \Big\{\sum_j p_j^\dag Z p_j\,:\, p_j \in \C[z,\overbar{z}]^n \Big\} \subset \C[z,\overbar{z}]
\end{equation*}
is the quadratic module generated by $Z$.
\end{corollary}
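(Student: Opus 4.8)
The plan is to derive both implications from the Archimedean matrix Positivstellensatz of Scherer--Hol, so that the only genuine work lies in verifying its hypotheses for the elliptope. The implication ``$q+\ve\in Q$ for all $\ve>0$ $\Rightarrow$ $q\ge0$ on $\cZ$'' I would dispose of first by evaluation: if $q+\ve=\sum_j p_j^\dag Z p_j$ and $\alpha\in\cZ$, then $Z(\alpha)\succeq0$ forces $p_j(\alpha)^\dag Z(\alpha)p_j(\alpha)\ge0$ for every $j$, whence $q(\alpha)\ge-\ve$, and letting $\ve\to0^+$ gives $q(\alpha)\ge0$. For the converse I would invoke the cited theorem in its strict form: once $\cZ$ is compact and $Q$ is Archimedean, every polynomial \emph{strictly} positive on $\cZ$ lies in $Q$. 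Given $q\ge0$ on $\cZ$, each $q+\ve$ with $\ve>0$ is strictly positive on $\cZ$, which then yields $q+\ve\in Q$. (Since $\cZ$ has nonempty interior, nonnegativity of $q$ already forces $q=q^\dag$, consistent with $Q$ consisting of self-adjoint elements.)

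The key step is to produce a boundedness certificate inside $Q$, which simultaneously establishes compactness of $\cZ$ and the Archimedean property. For each pair $i<j$, taking $p=e_i-\overline{z_{ij}}\,e_j\in\C[z,\overline z]^n$ gives
$$p^\dag Z p = Z_{ii}-\overline{z_{ij}}Z_{ij}-z_{ij}Z_{ji}+|z_{ij}|^2 Z_{jj}=1-|z_{ij}|^2\in Q,$$
using $Z_{ii}=Z_{jj}=1$, $Z_{ij}=z_{ij}$ and $Z_{ji}=\overline{z_{ij}}$. Summing over all pairs yields $\binom{n}{2}-\sum_{i<j}|z_{ij}|^2\in Q$, which is exactly the Archimedean certificate; together with the observation that the $2\times2$ principal minors of $Z$ already force $|z_{ij}|\le1$ on $\cZ$ (so that $\cZ$ is closed and bounded, hence compact), this verifies both hypotheses. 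I would also record that $Q$ contains every Hermitian square, since $f^\dag f=(f e_1)^\dag Z(f e_1)$, so $Q$ agrees with the standard quadratic module generated by $Z$ with the ambient sum-of-squares cone absorbed into the diagonal $Z_{kk}=1$; this is precisely the object to which Scherer--Hol applies.

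The main obstacle is conceptual rather than computational: one must pass between the complex Hermitian formulation, where $Z=Z^\dag$ and the squares are $p^\dag Z p$, and the real symmetric setting in which the matrix Positivstellensatz is classically stated. This is accomplished by identifying $\C[z,\overline z]$ with the real polynomial ring in the real and imaginary parts of the $z_{ij}$, realizing the Hermitian constraint $Z\succeq0$ as a real symmetric matrix inequality, and checking that Hermitian-square quadratic modules correspond to their real counterparts under this dictionary. This translation is exactly what is packaged in the complex version \cite[Corollary~1]{SchererHol2006}; with the Archimedean certificate above in hand, the corollary applies directly and the proof concludes.
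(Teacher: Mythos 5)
The paper gives no proof of this corollary—it is stated as a direct citation of Scherer--Hol—so your argument, which reduces to the same Archimedean matrix Positivstellensatz, is essentially the paper's (implicit) approach with the missing verification supplied. Your computation $p^\dag Z p = 1-|z_{ij}|^2$ for $p=e_i-\overline{z_{ij}}e_j$ correctly furnishes the Archimedean certificate and the compactness of $\cZ$, and the easy direction by evaluation is fine, so the proposal is correct.
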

 
Sandwiching $Z$ with polynomials of at most degree $\ell$ yields
the $\ell$-truncated quadratic module
\begin{equation}\label{eq:Ql}
 Q_\ell = \big\{\tr( (u_\ell \ot \one_n)^\dag G (u_\ell \ot \one_n) Z) \,:\, G\succeq 0 \big\}\,, 
\end{equation} 
where $u_\ell$ is the vector of ordered monomials in $z, \overbar{z}$ of degree at most $\ell$, and $G$ is a
$m_\ell \times m_\ell$ matrix 
with $m_\ell = n \binom{n(n-1) + l}{n(n-1)}$.
Clearly, $Q = \bigcup_\ell Q_\ell$. 
Note that $f_w$ can be of degree $n$; to consider whether $f_w +\ve\in Q_\ell$ for some $\ve>0$, it is therefore sensible to restrict $\ell \geq \lceil \frac{n}{2} \rceil$.

A matrix polynomial $P(z) \in \C[z,\overbar{z}]^{n\times n}$ is a sum of squares (SOS) if there is a matrix polynomial 
$S(z) \in \C[z,\overbar{z}]^{m \times n}$ such that $P(z) = S^\dag(z) S(z)$.
By writing $G = Y^\dag Y$, the polynomial matrix $(u_\ell \ot \one_n)^\dag G (u_\ell \ot \one_n)$ 
is easily seen to be SOS, 
\begin{align*}
 (u_\ell \ot \one_n)^\dag G (u_\ell \ot \one_n) &= (Y(u_\ell \ot \one_n))^\dag Y(u_\ell \ot \one_n)
 =\left( \sum_i Y_i (u_\ell)_i \right)^\dag \left(\sum_i Y_i (u_\ell)_i\right)\,,
\end{align*} 
where $Y = (Y_1, \dots, Y_{m_\ell})$ is understood as a block $1\times \frac{m_\ell}{n}$ matrix with $m_\ell \times n$ blocks $Y_i$.

\smallskip

Given $r \in J_d$,  consider the following commutative polynomial optimization problem:
\begin{equation}\label{eq:pop}\tag{POP}
\begin{aligned}
\ve^*= &\inf_{
 \varepsilon \in \R,\ 
 w \in \C S_n
 } &&  \varepsilon \\
 &\text{subject to} && w=w^\dag \\
 &                  && \tau(r w) = -1\\
 &                  && f_w + \varepsilon \geq 0 \text{ on } \mathcal{Z}  \,.\\
\end{aligned}
\end{equation} 
This gives rise to the following hierarchy of SDP relaxations for \ref{eq:pop}, indexed by $\ell\in\N$:
\begin{equation}\label{eq:sdp_f_o_m}\tag{SDP-POP}
\begin{aligned}
\ve_\ell^*= &\inf_{
 \substack{ 
 \varepsilon \in \R,\ 
 w \in \C S_n, \\ 
 G \in L(\C^{m_\ell})
 }
 } &&  \varepsilon \\
 &\text{subject to} && w=w^\dag \\
                    &&& G \succeq0\\
 &                  && \tau(r w) = -1\\
 &                  && f_w + \varepsilon = \tr( (u_\ell^\dag \ot \one_n) G (u_\ell \ot \one_n) Z)\,.\\
\end{aligned}
\end{equation} 

\begin{corollary}
Let $r\in J_d$. Then $\mu_d(r)$ is entangled if and only if $\ve^*_\ell<1$ for some $\ell\in\N$.
\end{corollary}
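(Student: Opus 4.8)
The plan is to reduce the statement to Theorem \ref{thm:werner-gram} via the exact optimization problem \ref{eq:pop}, and then to connect \ref{eq:pop} with its relaxations \ref{eq:sdp_f_o_m} through the complex matrix Positivstellensatz (Corollary \ref{cor:compmatpss}). The whole argument hinges on the observation that the additive slack $\ve$ is the generalized matrix function of $\ve\,\id$: since $z_{ii}=1$ we have $f_{\ve\id}=\ve$, hence $f_w+\ve=f_{w+\ve\id}$, and the normalization $\tau(rw)=-1$ together with $\tau(r)=1$ (as $r$ is a state) turns the relevant threshold into exactly $1$.

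First I would establish the exact claim that $\mu_d(r)$ is entangled if and only if $\ve^*<1$, where $\ve^*$ is the optimal value of \ref{eq:pop}. For the forward direction, Theorem \ref{thm:werner-gram} supplies $w=w^\dag$ with $f_w\ge0$ on $\cZ$ and $\tau(wr)<0$; rescaling $w$ by the positive scalar $-1/\tau(wr)$ yields a point feasible for \ref{eq:pop} with $\ve=0$, so $\ve^*\le0<1$. Conversely, if $\ve^*<1$ there is a feasible pair $(\ve,w)$ with $\ve<1$, $\tau(rw)=-1$ and $f_w+\ve\ge0$ on $\cZ$. Setting $\tilde w=w+\ve\,\id$ gives $f_{\tilde w}=f_w+\ve\ge0$ on $\cZ$ and $\tau(\tilde w r)=\tau(wr)+\ve\,\tau(r)=\ve-1<0$, so Theorem \ref{thm:werner-gram}(ii) applies and $\mu_d(r)$ is entangled.

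Next I would relate \ref{eq:pop} to the hierarchy \ref{eq:sdp_f_o_m}. One inequality is immediate: for every $\ell$ and every feasible triple $(\ve,w,G)$ of \ref{eq:sdp_f_o_m}, the defining identity expresses $f_w+\ve$ as an element of $Q_\ell\subseteq Q$, which is nonnegative on $\cZ$ because $Z(\alpha)\succeq0$ there; hence $(\ve,w)$ is feasible for \ref{eq:pop} and $\ve_\ell^*\ge\ve^*$. In particular, $\ve_\ell^*<1$ for some $\ell$ forces $\ve^*<1$. For the reverse, suppose $\ve^*<1$ and pick a feasible $(\ve_0,w_0)$ of \ref{eq:pop} with $\ve_0<1$, then choose $\ve_1$ with $\ve_0<\ve_1<1$. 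The polynomial $f_{w_0}+\ve_0$ is nonnegative on $\cZ$, so by Corollary \ref{cor:compmatpss} applied with slack $\ve_1-\ve_0>0$ we obtain $f_{w_0}+\ve_1\in Q=\bigcup_\ell Q_\ell$, i.e.\ $f_{w_0}+\ve_1\in Q_\ell$ for some $\ell$. Unpacking the definition \eqref{eq:Ql} of $Q_\ell$ produces $G\succeq0$ realizing the equality constraint of \ref{eq:sdp_f_o_m}, so $(\ve_1,w_0,G)$ is feasible at level $\ell$ and $\ve_\ell^*\le\ve_1<1$.

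Combining the two equivalences gives the corollary. I expect the only delicate point to be the interface between the \emph{strict} threshold $<1$ and the Positivstellensatz, which certifies membership in $Q$ only after adding an arbitrarily small positive constant: this is exactly why I keep a strict gap $\ve_0<\ve_1<1$, so that the extra slack demanded by Corollary \ref{cor:compmatpss} can be absorbed while the value stays below $1$. Everything else is bookkeeping with the identities $f_{w+\ve\id}=f_w+\ve$ and $\tau(r)=1$, together with the inclusions $Q_\ell\subseteq Q_{\ell+1}$.
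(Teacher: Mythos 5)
Your proposal is correct and follows essentially the same route as the paper: both directions reduce to Theorem \ref{thm:werner-gram} via the identity $f_{w+\ve\id}=f_w+\ve$ and the normalization $\tau(r)=1$, and the passage from nonnegativity on $\cZ$ to membership in some $Q_\ell$ is handled by Corollary \ref{cor:compmatpss} with a strict slack (the paper simply fixes the slack to $\tfrac12$ where you keep a generic gap $\ve_0<\ve_1<1$). Your extra intermediate step through the exact value $\ve^*$ of \ref{eq:pop} is a harmless reorganization, and your use of an arbitrary feasible pair rather than an optimizer in the backward direction is in fact slightly more careful than the paper's wording.
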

\begin{proof}
$(\Rightarrow)$ If $\mu_d(r)$ is entangled, then there is $w=w^\dag\in \C S_n$ such that $\tau(rw)<0$ and $f_w|_{\mathcal Z}\ge0$ by Theorem~\ref{thm:werner-gram}. After rescaling $w$ we can assume that $\tau(rw)=-1$. 
By Corollary~\ref{cor:compmatpss}, 
there exists $\ell\in\N$ such that $f_w+\frac12\in Q_\ell$. 
Then $\ve_\ell^*\le \frac12<1$.

$(\Leftarrow)$ Suppose 
$\ve_\ell^*<1$ for some $\ell\in\N$.
Then
$$\tau(r(w+\ve_\ell^* \id))=\tau(rw)+\ve_\ell^*\tau(r)=-1+\ve_\ell^*<0$$
and
$f_{w+\ve_\ell^*\id}$ is nonnegative on $\mathcal{Z}$. Therefore $\mu_d(r)$ is entangled by Theorem~\ref{thm:werner-gram}.
\end{proof}

\begin{remark}\label{rem:size1}\rm
Fix $n\in\N$. The $\ell$th SDP \ref{eq:sdp_f_o_m} has
$$1+n!+n^2\binom{n(n-1)+\ell}{n(n-1)}^2 = O(\ell^{2n(n-1)})$$
real variables ($\ve$, coefficients of $w=w^\dag$, and entries of $G$), and its semidefinite constraint has size $n\binom{n(n-1)+\ell}{n(n-1)}$. 
Thus the size of \ref{eq:sdp_f_o_m} grows polynomially in $\ell$.
\end{remark}

\section{Entanglement witnesses via trace polynomial optimization}

In this section we associate Werner state witnesses with multilinear trace polynomials with certain positivity properties (Theorem \ref{thm:trace_poly_formulation}). 
Thus we translate the problem of finding Werner state witnesses to trace polynomial optimization, and produce a second SDP hierarchy for entanglement detection.

\subsection{Trace polynomials}

Trace polynomials are polynomials in noncommuting variables where some terms are traced, for example
\begin{equation*}
\tr(x_1 x_2) x_3 - \tr(x_2 x_3 x_1)\one
+ 2 \tr(x_1x_3)^2 x_2 + x_1x_3-x_3x_1+\one
\,.
\end{equation*}
Here we only work with linear combinations of terms of the form
\begin{equation*}
T_\sigma 
= \tr(x_{\a_1} \cdots x_{\alpha_r})  \cdots 
\tr(x_{\zeta_1} \cdots x_{\zeta_t}) \,,
\end{equation*}
where $\sigma = (\a_1 \dots \alpha_r) \dots (\zeta_1 \dots \zeta_t)$ is a permutation.
For example, $T_{(132)(4)} = \tr(x_1x_3x_2) \tr(x_4)$. 
As before,
let $\eta_d$ be the representation of $S_n$ on $(\C^d)^{\otimes n}$ that permutes the tensor factors.
Then
a direct calculation in $L(\cdn)$ shows~\cite[Lemma 4.9]{Kostant2009}
\begin{equation}\label{eq:T_sig-eq-tensor}
\tr\left( \eta_d(\sigma) (X_1 \otimes \cdots \otimes X_n)\right)
= T_{\sigma^{-1}}(X_1, \dots, X_n)
\end{equation}
for all $X_1,\dots,X_n \in L(\C^d)$.
In particular,
\begin{equation}\label{eq:perm-trace}
\tr(\eta_d(\sigma)) = d^{N_{\text{cyc}} (\sigma)}
\end{equation}
where $N_{\text{cyc}}(\sigma)$ is the number of cycles in $\sigma$.
This leads to the following consequence of \cite[Theorem 16]{huber2020positive}.

\begin{theorem} \label{thm:trace_poly_formulation}
	Let $\varphi = \sum_{\pi \in S_n} a_\pi \eta_d(\pi)$ be a state, and let $\mathcal{W} = \sum_{\sigma \in S_n} w_\sigma \eta_d(\sigma)$. The following are equivalent:
	\begin{enumerate}[\rm (i)]
		\item $\mathcal{W}$ detects entanglement in $\varphi$;
		\item the trace polynomial $\sum_{\sigma \in S_n} w_\sigma T_{\sigma^{-1}}(x_1, \dots, x_n)$
		satisfies
		\begin{align*}
		&\sum_{\sigma \in S_n} w_\sigma T_{\sigma^{-1}}(X_1, \dots, X_n) \geq 0 
		\quad \forall X_i \in L(\C^d),\ X_i\geq 0\,, \nonumber\\
		&\sum_{\sigma, \pi \in S_n} w_{\sigma} a_\pi d^{N_{\text{cyc}} (\sigma\pi)} < 0\,.
		\end{align*}
	\end{enumerate}
\end{theorem}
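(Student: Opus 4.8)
The plan is to prove the two lines of condition~(ii) separately, matching the first (positivity) line to the requirement that $\mathcal{W}$ be a valid entanglement witness and the second (strict negativity) line to the requirement $\tr(\mathcal{W}\varphi)<0$. Both translations are powered directly by the identities \eqref{eq:T_sig-eq-tensor} and \eqref{eq:perm-trace}, so once these are in hand the argument is essentially bookkeeping. The only structural input needed beyond them is that membership in $\SEP(d,n)$ can be tested on pure product states, together with the multilinearity of the trace polynomials $T_{\sigma^{-1}}$.

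For the witness property, I would begin from the identity obtained by applying \eqref{eq:T_sig-eq-tensor} term by term and using linearity of the trace:
\[
\tr\bigl(\mathcal{W}(X_1\ot\cdots\ot X_n)\bigr)=\sum_{\sigma\in S_n}w_\sigma T_{\sigma^{-1}}(X_1,\dots,X_n)
\]
for all $X_1,\dots,X_n\in L(\C^d)$. Since each index $1,\dots,n$ occurs exactly once in the cycle decomposition of $\sigma$, every $T_{\sigma^{-1}}$ is multilinear, so the right-hand side is linear in each slot. Hence its nonnegativity over all tuples of positive semidefinite $X_i$ is equivalent, by positive rescaling in each slot, to nonnegativity over all tuples of states, and then—passing to the extreme rank-one points of the state space—to nonnegativity of $\tr\bigl(\mathcal{W}(\dyad{v_1}\ot\cdots\ot\dyad{v_n})\bigr)$ for all $\vv\in\C^d$. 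As every element of $\SEP(d,n)$ is a convex combination of such pure product states, this last condition is precisely the statement that $\mathcal{W}$ is nonnegative on all separable states. Thus the first line of (ii) is equivalent to $\mathcal{W}$ being an entanglement witness.

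For the detection part, I would compute $\tr(\mathcal{W}\varphi)$ outright. Because $\eta_d$ is a group homomorphism, $\eta_d(\sigma)\eta_d(\pi)=\eta_d(\sigma\pi)$, and therefore, using \eqref{eq:perm-trace},
\[
\tr(\mathcal{W}\varphi)=\sum_{\sigma,\pi\in S_n}w_\sigma a_\pi\,\tr\bigl(\eta_d(\sigma\pi)\bigr)=\sum_{\sigma,\pi\in S_n}w_\sigma a_\pi\,d^{N_{\text{cyc}}(\sigma\pi)}.
\]
Hence $\tr(\mathcal{W}\varphi)<0$ is literally the second line of (ii), and combining the two equivalences yields (i)$\iff$(ii).

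I do not anticipate a genuine obstacle; the statement is a clean dictionary between operators on $\cdn$ and trace polynomials, already packaged in \cite[Theorem 16]{huber2020positive}. The two points deserving care are the multilinearity and extreme-point reduction from arbitrary positive semidefinite $X_i$ to pure product states (so that the positivity line of (ii) really captures the full witness condition), and keeping the index conventions straight—namely the inverse in $T_{\sigma^{-1}}$ coming from \eqref{eq:T_sig-eq-tensor} versus the product $\sigma\pi$ in the trace computation. An index slip in the latter is harmless to the final cycle count, since $\sigma\pi$ and $\pi\sigma$ are conjugate and hence have the same number of cycles.
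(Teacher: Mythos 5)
Your proposal is correct and follows essentially the same route as the paper's proof: both reduce the witness condition to product states via convexity, use \eqref{eq:T_sig-eq-tensor} plus multilinearity to pass between product states and arbitrary positive semidefinite tuples, and compute $\tr(\mathcal{W}\varphi)$ directly from \eqref{eq:perm-trace}. You merely spell out the rescaling/extreme-point steps that the paper leaves implicit.
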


\begin{proof}
	The set of separable states $\SEP(d,n)$ is convex and 
	it suffices to ascertain that $\tr(\mathcal{W} \varrho ) \geq 0$ holds for all product states~$\varrho$. 
	With Eq.~\eqref{eq:T_sig-eq-tensor} one has
	\begin{equation*}
	\tr(\mathcal{W} \varrho_1 \otimes \cdots \otimes \varrho_n) 
	= \sum_{\sigma \in S_n} w_\sigma T_{\sigma^{-1}}(\varrho_1, \dots, \varrho_n) \,.
	\end{equation*}
	The expression is multilinear so we can replace the $\varrho_i$ by arbitrary $X_i \geq 0$ in $L(\C^d)$.
	With Eq.~\eqref{eq:perm-trace} it is immediate that 
	\begin{equation*}
	\tr(\mathcal{W} \varphi) 
	= \sum_{\sigma, \pi \in S_n} w_\sigma a_\pi d^{N_{\text{cyc}} (\sigma\pi)} \,. \qedhere
	\end{equation*}
\end{proof}

\subsection{Trace polynomial optimization}

In this subsection we give an alternative way of confirming Werner state entanglement 
using a recently introduced framework for trace polynomial optimization \cite{klep2020optimization}.
The key idea is the following: for the trace polynomials appearing in Theorem~\ref{thm:trace_poly_formulation}, 
instead of requiring positivity in matrix variables of size $d$,
one asks for positivity in operator variables from any tracial von Neumann algebra.
This is of course a stronger requirement; 
however, positivity of trace polynomials over all tracial von Neumann algebras can be exactly described by sums of squares and their traces.

Let $\cM$ be the monoid generated by $x_1,\dots,x_n$ subject to relations $x_j^2=x_j$ for $j=1,\dots,n$. 
Namely, $\mathcal{M}$ is the set of words in $x_1,\dots,x_n$ without consecutive repetitions of letters, and for $v,w\in\cM$ define $vw$ as the concatenation of $v$ and $w$ with consecutive repetitions of letters removed. 
The empty word in $\cM$ is denoted by $1$.
Also define a natural involution $\dag$ that reverses words,  
and an equivalence relation: 
$v\sim w$ if $w$ can be obtained by a cyclic rotation of the letters in $v$.

Denote the equivalence class of $u\in\cM\setminus\{1\}$ by $\uptau(u)$.
The defining relations for $\cM$ (namely $x_j^2 = x_j$ for $j = 1,\dots, n$) describe projections,
and so $\uptau$ simulates a tracial state on a product of projections.
Let $A$ be the complex polynomial ring in symbols $\uptau(u)$ for $u\in\cM\setminus\{1\}$, 
and let $\cA=A\otimes \C\cM$.
Thus $\cA$ is a noncommutative algebra which inherits the involution $*$ from $\cM$. 
Assigning elements from $\cM$ to their equivalence classes $A$-linearly extends to a unital trace map $\uptau:\cA\to A$.
For example, if 
$$
a        = 3i \uptau(x_1 ) x_2 x_1x_3x_2+\uptau(x_2)x_2 \in \cA
$$
then
\begin{align*}
a^\dag       &= -3i \uptau(x_1 ) x_2x_3x_1 x_2+\uptau(x_2)x_2\,, \\
\uptau(a) &= 3i\uptau(x_1 ) \uptau(x_2x_1 x_3)+\uptau(x_2)^2\,.
\end{align*}

Let $a\in A$. Given a von Neumann algebra $\cF$ with a tracial state $\omega:\cF\to\C$ and a tuple $X=(X_1,\dots,X_n)$ of projections $X_j\in\cF$, there is a naturally defined evaluation $a(X)\in\C$, 
determined by $\uptau(x_{j_1}\cdots x_{j_\ell})(X_1,\dots,X_n) = \omega(X_{j_1}\cdots X_{j_\ell})$.

The elements from $\cA$ of the form $\uptau(u_1)\cdots\uptau(u_m)u_0$ for $u_0,\dots,u_m\in \cM$ are called {\em tracial words}. 
Let us fix some total ordering of tracial words that respects their word length. For $\ell\in\N$ let $W_\ell$ be the vector of ordered tracial words in $\cA$ of length at most $\ell$.
Given $a\in A$ let
\begin{equation}\label{eq:KMVcor57}
 \epsilon_\ell=\inf\big\{\epsilon\colon a+\epsilon = \uptau(W_\ell^\dag GW_\ell),\ G\succeq0\big\}\,. 
\end{equation} 
Note that $\uptau(W_\ell^\dag GW_\ell)$ yields a trace of sum of squares in $\cA$.
The value $\epsilon_\ell$ relates to optimization over all tracial von Neumann algebras in the  following way.
\begin{corollary}[{Complex analog of \cite[Corollary 5.7]{klep2020optimization}}]
\label{cor:KMVcor57}
The sequence $(\epsilon_\ell)_\ell$ in Eq.~\eqref{eq:KMVcor57} is decreasing and bounded; let $\epsilon^*$ be its limit.
Then $-\epsilon^*$ is the infimum of $a(X)$ over all tuples $X$ of projections from tracial von Neumann algebras.
\end{corollary}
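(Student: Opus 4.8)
The plan is to establish convergence of $(\epsilon_\ell)_\ell$ and then prove two opposite inequalities between $-\epsilon^*$ and $m^*:=\inf_X a(X)$, the infimum being over all tuples $X$ of projections from tracial von Neumann algebras. Before starting I would record a harmless normalization: one may assume $a=a^\dag$, for otherwise no symmetric element of the form $\uptau(W_\ell^\dag G W_\ell)$ with $G=G^\dag\succeq0$ can equal $a+\epsilon$, so every feasible set is empty and the statement is vacuous. With $a=a^\dag$ each evaluation is real, since $\overline{a(X)}=a^\dag(X)=a(X)$, so $m^*$ is a well-defined real number; it is moreover finite because every generator evaluates as $\uptau(u)(X)=\omega(X_{j_1}\cdots X_{j_\ell})$ of modulus at most $1$, confining $a(X)$ to a bounded set.

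For convergence I would first observe monotonicity: as the chosen ordering of tracial words respects word length, $W_\ell$ is an initial segment of $W_{\ell+1}$, so any feasible Gram matrix at level $\ell$ extends to level $\ell+1$ by padding with zero rows and columns without altering $\uptau(W_\ell^\dag G W_\ell)$; hence the feasible set of $\epsilon$ only grows and $\epsilon_{\ell+1}\le\epsilon_\ell$. For a lower bound I would evaluate a feasible identity $a+\epsilon=\uptau(W_\ell^\dag G W_\ell)$ at an arbitrary projection tuple $X$: positivity of the trace and $G\succeq0$ force $a(X)+\epsilon\ge0$, whence $\epsilon\ge-a(X)$ and so $\epsilon_\ell\ge-a(X)$ for all $\ell$. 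A decreasing sequence bounded below converges, which gives $\epsilon^*=\inf_\ell\epsilon_\ell$; and since $\epsilon^*\ge-a(X)$ for every $X$, taking the supremum over $X$ yields the easy inequality $\epsilon^*\ge-m^*$, i.e. $-\epsilon^*\le m^*$.

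The reverse inequality is where the real content sits, and here I would invoke the complex analog of the tracial Positivstellensatz \cite[Theorem 4.4]{klep2020optimization}. Since $a-m^*\ge0$ on every tuple of projections from a tracial von Neumann algebra, for each $\delta>0$ the element $a-m^*+\delta$ is a trace of a sum of Hermitian squares, say $a-m^*+\delta=\uptau\big(\sum_i b_i^\dag b_i\big)$ with $b_i\in\cA$. Expanding each $b_i$ over the tracial-word basis and gathering the finitely many words that occur into $W_\ell$ for $\ell$ large enough, the Gram matrix $G=\sum_i c_i c_i^\dag\succeq0$ assembled from the coefficient vectors $c_i$ satisfies $\sum_i b_i^\dag b_i=W_\ell^\dag G W_\ell$. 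Thus $\epsilon=\delta-m^*$ is feasible at level $\ell$, so $\epsilon^*\le\epsilon_\ell\le\delta-m^*$; letting $\delta\to0$ gives $\epsilon^*\le-m^*$, i.e. $-\epsilon^*\ge m^*$. Combined with the easy direction this proves $-\epsilon^*=m^*=\inf_X a(X)$.

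The main obstacle is the Positivstellensatz used in the hard direction; everything else is bookkeeping. The one point requiring genuine care beyond citing \cite{klep2020optimization} is checking that the real-to-complex passage is benign, namely that the exact characterization of trace-positivity by traces of sums of squares, and the equivalence between such sums and positive semidefinite Gram matrices over tracial words, both survive when the scalars are $\C$, the squares are Hermitian squares $b^\dag b$, and $G$ is Hermitian positive semidefinite rather than real symmetric.
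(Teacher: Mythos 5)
Your proposal is correct and follows essentially the same route as the source the paper defers to: the paper gives no proof of Corollary~\ref{cor:KMVcor57}, citing \cite[Corollary 5.7]{klep2020optimization}, whose argument is exactly your combination of monotonicity via padding Gram matrices, the easy bound from evaluating a feasible certificate at a projection tuple, and the hard direction via the tracial Positivstellensatz \cite[Theorem 4.4]{klep2020optimization}. The one point you rightly flag --- that the real-scalar statements of the reference carry over to Hermitian squares $b^\dag b$ and Hermitian $G\succeq 0$ --- is precisely the content of the word ``complex analog'' in the corollary's name, and is routine.
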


We now look at the tracial words arising from elements in $S_n$.
Given a permutation $\sigma = (\a_1 \dots \alpha_r) \dots (\zeta_1 \dots \zeta_t) \in S_n$
define
\begin{equation}\label{eq:scaled}
\mathfrak{t}_\sigma =
n^{N_{\text{cyc}}(\sigma)}\uptau(x_{\a_1} \cdots x_{\alpha_r})  \cdots 
\uptau(x_{\zeta_1} \cdots x_{\zeta_t}) \in A \,.
\end{equation}
We extend this notation linearly to the group algebra $\C S_n$. The definition \eqref{eq:scaled} is motivated by the following observation. Let $w\in \C S_n$ and let $X\in L(\C^n)^n$ be a tuple of projections. On one hand, we can evaluate the trace polynomial $T_w$ on $X$ to obtain $T_w(X)\in\C$. On the other hand, $L(\C^n)$ is a tracial von Neumann algebra with the unique tracial state $\frac{1}{n}\tr$; since elements of $A$ can be evaluated at tuples of projections from von Neumann algebras, we can also talk about $\mathfrak{t}_w(X)\in\C$. The choice of the cycle-counting scalar factor in \eqref{eq:scaled} ensures that
\begin{equation}\label{eq:mat2vna}
T_w(X)=\mathfrak{t}_w(X)\,.
\end{equation}
Note that \eqref{eq:mat2vna} is valid only for projections on $\C^n$, and not for those on spaces of other dimensions.

\begin{proposition}\label{prop:oneway}
	Let $r \in J_d$ be a state.
	Suppose that there is a $w=w^\dag\in \C S_n$ such that 
	\begin{align}\label{eq:entchar2} 
	\tau(rw) & =-1\,,\nonumber\\
	\qquad \mathfrak{t}_w+\vartheta &= \uptau(W_\ell^\dag GW_\ell)\,,
	\end{align}
	for some $\vartheta<1$, $\ell\in\N$, and $G\succeq0$. 
	Then $\mu_{e}(r)$ is entangled for every $e\ge d$, with a dimension-free witness $\widetilde{w}=w+\vartheta\id$.
\end{proposition}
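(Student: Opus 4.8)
The plan is to verify that the Hermitian element $\widetilde w=w+\vartheta\id$ satisfies the two conditions of Theorem~\ref{thm:werner-gram}(ii) for the state $r$; note $\widetilde w=\widetilde w^\dag$ since $w=w^\dag$ and $\vartheta\in\R$. Once both conditions hold, Theorem~\ref{thm:werner-gram} applied with the ideal $J_e$ (which contains $r$ because $J_d\subseteq J_e$) shows that $\mu_e(r)$ is entangled for every $e\ge d$, and the reasoning behind the implication (ii)$\Rightarrow$(i) recorded in Corollary~\ref{cor:dim-free} certifies that $\widetilde w$ is in fact dimension-free. The linear-functional condition is immediate: as $r$ is a state, $\tau(r)=1$, so
\[
\tau(\widetilde w\,r)=\tau(wr)+\vartheta\,\tau(r)=-1+\vartheta<0
\]
by the hypotheses $\tau(rw)=-1$ and $\vartheta<1$; crucially this is independent of $e$.

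For the positivity condition I would first convert the sum-of-squares certificate into positivity over tracial von Neumann algebras. The right-hand side of $\mathfrak{t}_w+\vartheta=\uptau(W_\ell^\dag G W_\ell)$ with $G\succeq0$ is a trace of a sum of squares in $\cA$, so evaluating it at any tuple $X$ of projections from a tracial von Neumann algebra returns the tracial state of a positive semidefinite operator, hence a nonnegative number. Equivalently, Corollary~\ref{cor:KMVcor57} applied to $a=\mathfrak{t}_w$ yields $\epsilon_\ell\le\vartheta$ and therefore $\epsilon^*\le\vartheta$, so that the infimum of $\mathfrak{t}_w(X)$ over all such tuples equals $-\epsilon^*\ge-\vartheta$. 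Either way,
\[
\mathfrak{t}_w(X)+\vartheta\ge0
\]
for every tuple $X$ of projections from every tracial von Neumann algebra.

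It remains to specialize this to the elliptope. Fix $\alpha\in\cZ$. Since $\cZ=\cZ_n$, write $\alpha_{ij}=\braket{v_i}{v_j}$ for unit vectors $\vv\in\C^n$ and put $P_i=\dyad{v_i}\in L(\C^n)$ and $V=P_1\ot\cdots\ot P_n$, so that $P=(P_1,\dots,P_n)$ is a tuple of projections in the tracial von Neumann algebra $(L(\C^n),\tfrac1n\tr)$. Combining \eqref{eq:pro} and \eqref{eq:T_sig-eq-tensor} writes $f_w(\alpha)=\tr(\eta_n(w)V)$ as a trace-polynomial evaluation in the $P_i$, and the calibration \eqref{eq:mat2vna}---whose cycle-counting scalar in \eqref{eq:scaled} is tuned to the normalized trace on $L(\C^n)$---identifies it with $\mathfrak{t}_w(P)$. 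Hence
\[
f_{\widetilde w}(\alpha)=f_w(\alpha)+\vartheta=\mathfrak{t}_w(P)+\vartheta\ge0,
\]
and as $\alpha\in\cZ$ was arbitrary, $f_{\widetilde w}\ge0$ on $\cZ$. This is the remaining condition of Theorem~\ref{thm:werner-gram}(ii), completing the argument.

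The step I expect to be the main obstacle is the dimension bookkeeping hidden in $f_w(\alpha)=\mathfrak{t}_w(P)$. The identity \eqref{eq:mat2vna} is valid only for projections on $\C^n$: in dimension $N$ the two sides differ by the factor $(n/N)^{N_{\mathrm{cyc}}(\sigma)}$, so it is essential to exploit $\cZ=\cZ_n$ in order to realize every point of the elliptope already in dimension $n$. One must also track the Hermiticity $w=w^\dag$, which makes both $f_w$ and $\mathfrak{t}_w$ real and reconciles the $\sigma\leftrightarrow\sigma^{-1}$ relabeling implicit in \eqref{eq:T_sig-eq-tensor}. Finally, the certificate forces positivity of $\mathfrak{t}_w+\vartheta$ over all tracial von Neumann algebras, a strictly stronger demand than positivity over $(L(\C^n),\tfrac1n\tr)$ alone; this one-sidedness is exactly why the proposition is only an implication and why the hierarchy~\ref{eq:sdp_tracepoly} need not detect every entangled Werner state.
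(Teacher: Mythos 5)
Your proof is correct, and it shares the paper's core mechanism: read the certificate $\mathfrak{t}_w+\vartheta=\uptau(W_\ell^\dag GW_\ell)$ as positivity of $\mathfrak{t}_w+\vartheta$ on all projection tuples in tracial von Neumann algebras, then specialize to $\bigl(L(\C^n),\tfrac1n\tr\bigr)$ where the calibration \eqref{eq:mat2vna} converts $\mathfrak{t}_w$ back into the trace polynomial $T_w$. Where you diverge is the concluding step. The paper stays on the trace-polynomial side: it notes $T_{\widetilde w}(X)=T_w(X)+\vartheta\tr(X_1)\cdots\tr(X_n)$ with $\tr(P)\ge1$ for nonzero projections, extends to all positive semidefinite tuples by multilinearity and conic combinations, and invokes Theorem~\ref{thm:trace_poly_formulation} to conclude that $\eta_n(\widetilde w)$ witnesses $\mu_n(r)$, finally appealing to Theorem~\ref{thm:werner-gram} only to transfer between dimensions. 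You instead verify condition (ii) of Theorem~\ref{thm:werner-gram} directly, by realizing each $\alpha\in\cZ=\cZ_n$ as a Gram matrix of unit vectors in $\C^n$ and evaluating $\mathfrak{t}_w$ at the rank-one projections $\dyad{v_i}$; this bypasses Theorem~\ref{thm:trace_poly_formulation} and the multilinearity/conic-combination extension entirely, and it sidesteps the implicit sign issue in the paper's $\tr(P)\ge1$ step (which needs $\vartheta\ge0$; that does follow from the certificate by evaluating at the zero tuple, but your route never needs it). Your identity $f_w(\alpha)=\mathfrak{t}_w(P)$ is in fact exact for rank-one projections by direct computation of $\prod_i\braket{v_i}{v_{\sigma(i)}}$ as a product of cyclic traces, so the $\sigma\leftrightarrow\sigma^{-1}$ bookkeeping you flag is a convention artifact of quoting \eqref{eq:pro} and \eqref{eq:T_sig-eq-tensor} together rather than a genuine obstacle. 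Both arguments are valid; yours is slightly more economical, while the paper's makes the connection to Theorem~\ref{thm:trace_poly_formulation} (positivity on all PSD tuples) explicit.
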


\begin{proof}
By Theorem \ref{thm:werner-gram} it suffices to check that $\mu_n(r)$ is entangled. 
Firstly,
$$\tr\big(\mu_n(r)\eta_n(\widetilde{w})\big)
=\tau(r\widetilde{w})
=\tau(rw)+\vartheta \tau(r)
=-1+\vartheta <0$$
by \eqref{eq:entchar2} and Lemma \ref{lem:CSn2Hilbert}(1).
On the other hand, since $\mathfrak{t}_w+\vartheta$ is the trace of a sum of hermitian squares in $\cA$ by \eqref{eq:entchar2}, it attains nonnegative values on all tuples of projections from any von Neumann algebra $\cF$ with a tracial state $\omega$.
Therefore
\begin{equation}\label{eq:west}
0\ \le\ \vartheta +
\inf_{\substack{
	(\cF,\omega) \\
	X\in \cF^n \\
	X_j=X_j^\dag=X_j^2
}}
\mathfrak{t}_w(X) 
 \le\ \vartheta+
\inf_{\substack{
		X\in L(\C^n)^n \\
		X_j=X_j^\dag=X_j^2
}}
\mathfrak{t}_w(X) 
 =\ \vartheta+
\inf_{\substack{
		X\in L(\C^n)^n \\
		X_j=X_j^\dag=X_j^2
}}
T_w(X)
\end{equation}
where the last equality holds by \eqref{eq:mat2vna}. 
Note that $T_{\widetilde{w}}(X)=T_w(X)+\vartheta\tr(X_1)\cdots\tr(X_n)$ for every $X\in L(\C^n)^n$, and $\tr(P)\ge 1$ for every nonzero projection $P\in L(\C^n)$. Therefore \eqref{eq:west} implies
$$0 \le \inf_{\substack{
		X\in L(\C^n)^n \\
		X_j=X_j^\dag=X_j^2
}}
T_{\widetilde{w}}(X)\,.$$
Since $T_{\widetilde{w}}$ is multilinear and every positive semidefinite operator is a conic combination of projections, we conclude that $T_{\widetilde{w}}$ is nonnegative on all tuples of positive semidefinite operators on $\C^n$. Thus $\eta_n(\widetilde{w})$ is an entanglement witness for $\mu_n(r)$ by Theorem \ref{thm:trace_poly_formulation}.
\end{proof}

Given a state $r\in \C S_n$, let us consider the following trace polynomial optimization problem:
\begin{equation}
\label{eq:tracepoly}\tag{TPOP}
\begin{aligned}
\vartheta^* = &\inf_{
	\substack{ 
		\varepsilon \in \R,\ 
		w \in \C S_n
	}
} &&  \varepsilon \\
&\text{subject to}     && w=w^\dag \\
&                      && \tau(rw)  =-1\\
&                      && \mathfrak{t}_w + \varepsilon \geq 0 \text{ on } \cA \,.\\
\end{aligned}
\end{equation} 
This gives rise to the following hierarchy of SDP relaxations for \ref{eq:tracepoly}, indexed by $\ell\ge \lceil\frac{n}{2}\rceil$:
\begin{equation}
\label{eq:sdp_tracepoly}\tag{SDP-TPOP}
\begin{aligned}
\vartheta_\ell^*= &\inf_{
	\substack{ 
		\varepsilon \in \R,\ 
		w \in \C S_n, \\ 
		G
	}
} &&  \varepsilon \\
&\text{subject to}     && w=w^\dag \\
&&& G \succeq0\\
&                      && \tau(rw)  =-1\\
&                      && \mathfrak{t}_w + \varepsilon = \uptau\left( W_\ell^\dag G W_\ell\right) \,.\\
\end{aligned}
\end{equation} 
As a consequence of Proposition \ref{prop:oneway} we have:
\begin{corollary}
If $\vartheta_\ell^*<1$ for some $\ell\in\N$, then $\mu_n(r)$ is an entangled state. 
\end{corollary}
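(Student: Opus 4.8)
The plan is to derive the statement directly from Proposition~\ref{prop:oneway}, which already packages almost all of the work. The final corollary claims: if some term in the hierarchy \ref{eq:sdp_tracepoly} returns $\vartheta_\ell^*<1$, then $\mu_n(r)$ is entangled. First I would observe that $\vartheta_\ell^*<1$ means the infimum in \ref{eq:sdp_tracepoly} is attained (or approximated) by a triple $(\varepsilon,w,G)$ with $\varepsilon<1$ satisfying all the listed constraints: namely $w=w^\dag$, $G\succeq0$, $\tau(rw)=-1$, and the identity $\mathfrak{t}_w+\varepsilon=\uptau(W_\ell^\dag GW_\ell)$. To be careful about whether the infimum is achieved, I would argue that $\vartheta_\ell^*<1$ guarantees the existence of a feasible point with objective value strictly below $1$, since any value strictly less than $1$ that bounds the infimum from above is witnessed by some actual feasible $(\varepsilon,w,G)$ with $\varepsilon<1$; one simply picks $\vartheta<1$ larger than $\vartheta_\ell^*$ and uses near-optimality.

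Next I would match these feasible data to the hypotheses of Proposition~\ref{prop:oneway}. Setting $\vartheta=\varepsilon<1$, the constraints $\tau(rw)=-1$ and $\mathfrak{t}_w+\vartheta=\uptau(W_\ell^\dag GW_\ell)$ with $G\succeq0$ are exactly the two conditions in \eqref{eq:entchar2}. Proposition~\ref{prop:oneway} then immediately concludes that $\mu_e(r)$ is entangled for every $e\ge d$, and in particular for $e=n$, which is what the corollary asserts. The dimension-free witness $\widetilde w=w+\vartheta\,\id$ is produced as a byproduct, consistent with the statement of Theorem~\ref{thm:sndh} in the introduction.

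One subtlety I would flag is the role of the standing assumption $r\in J_d$: Proposition~\ref{prop:oneway} requires $r$ to be a state lying in $J_d$ for some $d$, whereas the corollary is phrased for $r\in\C S_n$. Since every state $r\in\C S_n$ lies in $J_n=\C S_n$, one may always take $d=n$, so the hypothesis of Proposition~\ref{prop:oneway} is satisfied with $e=d=n$ and no generality is lost. I would make this explicit to ensure the chain $\vartheta_\ell^*<1\Rightarrow$ feasible $(\varepsilon,w,G)$ with $\varepsilon<1\Rightarrow$ hypotheses of Proposition~\ref{prop:oneway} $\Rightarrow\mu_n(r)$ entangled is airtight.

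The main (and essentially only) obstacle is not in the logical chain—which is a one-line appeal to Proposition~\ref{prop:oneway}—but in correctly handling the infimum-versus-attainment point noted above, together with confirming that the index range $\ell\ge\lceil n/2\rceil$ in \ref{eq:sdp_tracepoly} suffices for the SOS representation to be nontrivial (a multilinear trace polynomial of degree $n$ requires tracial words up to length $\lceil n/2\rceil$ in the Gram decomposition). Both are routine, so I expect the proof to be very short, amounting to little more than invoking the proposition with the feasible point supplied by the hypothesis $\vartheta_\ell^*<1$.
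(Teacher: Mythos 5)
Your proposal is correct and matches the paper's own (implicit) argument, which derives the corollary directly as a consequence of Proposition~\ref{prop:oneway} by taking a feasible point of \ref{eq:sdp_tracepoly} with objective value below $1$. The extra care you take with infimum-versus-attainment and with $r\in J_n=\C S_n$ is sound and only makes explicit what the paper leaves unstated.
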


\begin{remark}\label{rem:size2}\rm
 Fix $n\in\N$. Since $\cM$ is a subset of tracial words in $\cA$, a very crude lower bound on the length of the vector $W_\ell$ is
$$M_\ell = \sum_{i=1}^\ell n(n-1)^{i-1} = n\frac{(n-1)^\ell-1}{n-2},$$
so the number of variables in the $\ell$th SDP \ref{eq:sdp_tracepoly} is at least exponential in $\ell$,
$$1+n!+\frac{(M_\ell+1)M_\ell}{2} = O((n-1)^{2\ell})\,.$$
\end{remark}

\section{Comparison of hierarchies}

Some remarks on the two SDP hierachies are in order.

The trace polynomial optimization framework in Proposition~\ref{prop:oneway} shares analogies with both Theorems~\ref{thm:trace_poly_formulation} and \ref{thm:werner-gram}. Like the latter, Proposition~\ref{prop:oneway} gives a dimension-independent certificate of entanglement. On the other hand, the trace polynomial context is closer to Theorem~\ref{thm:trace_poly_formulation}, although Proposition~\ref{prop:oneway} employs a different parametrization of witnesses (as it appeals to von Neumann algebras and their tracial states which are necessarily unital), leading to a dimension-independent statement.

However, it is important to mention that Proposition~\ref{prop:oneway} is possibly weaker than Theorem~\ref{thm:werner-gram} in the sense that it is unclear whether it detects entanglement of every entangled Werner state. 
While a positive resolution of the Connes embedding conjecture would likely imply the converse of Proposition~\ref{prop:oneway}, the former turned out to be false \cite{ji2020mipre}. 

Nevertheless, Proposition \ref{prop:oneway} leads to the hierarchy \ref{eq:sdp_tracepoly} for entanglement detection with smaller initial SDPs than 
the ones in \ref{eq:sdp_f_o_m}.
Comparing the number of variables from Remark~\ref{rem:size1} and ~\ref{rem:size2} we see the following:
for large $\ell$, the (commutative) \ref{eq:sdp_f_o_m} is much smaller than the (noncommutative) \ref{eq:sdp_tracepoly}. However, when utilizing SDP hierarchies in practice, one usually computes only the first few steps of the hierarchy, with the hope that they already give the sought answer. 
Since projections and tracial states of their products satisfy several relations, the first few steps of the second hierarchy \ref{eq:sdp_tracepoly} are actually much smaller than the first few steps of the first hierarchy \ref{eq:sdp_f_o_m}. Table \ref{tab:comparison} below compares the sizes of semidefinite constraints and numbers of equations in the first two steps of hierarchies ($\ell=\lceil \frac{n}{2}\rceil$ and $\ell=\lceil \frac{n}{2}\rceil+1$).

A further reduction is possible if one is interested in real states and real separability. Then one can take a coarser equivalence relation on $\cM$ that identifies $v$ and $v^\dag$ (thus $\uptau$ simulates a tracial state on a product of real projections) and restrict the scalars of $\cA$ to be real numbers. Encoding these additional symbolic constraints into $\cA$ decreases the number of tracial words of a given length, and thus decreases the size of the semidefinite constraint in the resulting analog of~\ref{eq:sdp_tracepoly}.

\begin{table}[!ht]
	\begin{tabular}{@{}c c l l l l l@{}}
	 &&\multicolumn{2}{l}  {\ref{eq:sdp_f_o_m}}
         &&\multicolumn{2}{l}  {\ref{eq:sdp_tracepoly}}\\
	$n$ && step 1 & step 2 && step 1 & step 2 \\
        \midrule
	$3$ && $(84, 211)$ & $(252, 925)$ && $(31,86)$  & $(109,443)$ \\
	$4$ && $(364, 1821)$ & $(1820, 18565)$ && $(53,246)$ & $(253,2432)$ \\
	$5$ && $(8855, 230231)$ & $(53130, 3108106)$  && $(491,9722)$ & $(2681,157492)$ \\
	\\
	\end{tabular}
	\caption{Pairs of sizes of semidefinite constraints and numbers of equations in \ref{eq:sdp_f_o_m} and \ref{eq:sdp_tracepoly} 
	for the first two steps in the hierarchies.
	\label{tab:comparison}
	}
\end{table}

\section{An example}\label{sec:exa}

In this section we use the second hierarchy  \ref{eq:sdp_tracepoly}
to detect entanglement in a four-qubit Werner state which has positive partial transposes across all bipartitions.
Let 
$s=41 \cdot \id+5\cdot(12)+5\cdot(34)+20\cdot(1234) \in \C S_4\,.$
There is a unique $r\in J_2\subset \C S_4$ such that\looseness=-1
$$\varrho = \mu_2(r)=\frac{\eta_2(ss^\dag)}{\tr(\eta_2(ss^\dag))}$$
is a four-qubit Werner state.
More explicitly, as in Remark \ref{rem:para} we get
\begin{align}\label{eq:4qb_ex}
r\,=\
&\tfrac{1}{24}\id
+\tfrac{1069}{34302}[(12) + (34)]
+\tfrac{7247}{274416}[(14) + (23)]
+\tfrac{6947}{274416}(13)
+\tfrac{7547}{274416}(24) \\
+&\tfrac{707}{34302}[(123) + (132) + (134) + (143)]
+\tfrac{1489}{68604}[(234) + (243) + (124) + (142)]\nonumber\\
+&\tfrac{8101}{548832}[(1324)+ (1423)]
+\tfrac{8251}{548832} [(1243) + (1342)]
+\tfrac{13171}{548832}[(1234) + (1432)]\nonumber\\
+&\tfrac{3811}{274416}(13)(24)
+\tfrac{6271}{274416}(14)(23)
+\tfrac{7651}{274416}(12)(34)\,.\nonumber
\end{align}
One can check that the partial transposes of $\varrho = \mu_2(r)$ are positive semidefinite for all bipartitions.
Consequently the Peres-Horodecki or PPT criterion does not detect entanglement in $\varrho$. 
However, already the first step ($\ell=\lceil\frac{4}{2}\rceil=2$) of the hierarchy \ref{eq:sdp_tracepoly} confirms that $\varrho$ is entangled. 
Since $r\in\R S_4$, it suffices to optimize over $w\in\R S_4$ and real symmetric $G$ in \ref{eq:sdp_tracepoly}. 
The numerical solution is $\vartheta_2 \approx 0.8537<1$,
from which a corresponding witness numerical $\widetilde{w}\in\R S_4$ as in Proposition \ref{prop:oneway} can be extracted.

Since $0.8537$ is close to 1, one might wish for an exact $w\in \mathbb{Q} S_4$ to clear doubts about numerical errors. 
To achieve this, we choose some rational $\vartheta_2'\in(\vartheta_2,1)$, for example $\vartheta_2'=\frac{9}{10}$, and solve the feasibility SDP
\begin{equation}
\label{e:wsw6}
w=w^\dag\,,
\quad G\succeq0\,,
\quad \tau(rw) = -1\,,
\quad \mathfrak{t}_w +\vartheta_2' =\uptau\left( W_\ell^\dag G W_\ell\right)\,.
\end{equation}

Geometrically, \eqref{e:wsw6} looks for a point in the intersection
of the positive semidefinite cone with an affine subspace.
In our example, the $53\times 53$ floating point solution $G$ produced by the 
interior-point method
SDP solver is positive definite. Therefore rationalizing, i.e., choosing
a sufficiently fine rational approximation of $G$, and then projecting onto the affine subspace will result in a rational solution of \eqref{e:wsw6}, cf.~\cite{PaPe08,CKP15}. 

Concretely, we obtain the exact dimension-free witness $\widetilde{w}=\frac{9}{10}\id+w\in\mathbb{Q}S_4$,

 \begin{align*}
\widetilde{w}\,=\ 
 	&\tfrac{70530553080581117}{73043335638912450} \id \\
 	 +&\tfrac{2153437054}{34127477475} [(12)+(34)]
 	 -\tfrac{1084798063661}{17296968712275} [(14) + (23)]
 	 -\tfrac{6399721673153}{58543548235200} (13)
 	 -\tfrac{166092679}{1576051425} (24)  \\
 	 -&\tfrac{128169}{202825} (12)(34)
 	 -\tfrac{112106999}{38636465420} (13)(24)
 	 -\tfrac{5}{66} (14)(23) \\
 	 +&\tfrac{441051017}{1988704319} [(234) + (243) + (124) + (142)]\\
 	 +&\tfrac{626723}{2766720}   [(123) + (132) + (134) + (143)]\\
 	 +&\tfrac{446599}{678600} [(1243) + (1342)]
 	 +\tfrac{23599}{171600} [(1324) + (1423)]
 	 -\tfrac{5220239}{3065280}  [(1234) + (1432)] \,.
 \end{align*}
The symmetry with respect to the parametrization of $r$ in ~\eqref{e:wsw6} is evident.

Note that due to Corollary~\ref{c:indep}, 
the state in Eq.~\eqref{eq:4qb_ex} is entangled in every dimension $d\ge2$.

\section{Additional remarks}
\def\cB{\mathcal{B}}

In this section we indicate how the techniques developed in this paper can be applied to non-Werner states and immanants.

\subsection{States invariant under a different unitary action}
It is well known that $n$-partite Werner states require fewer parameters (that is, $n!$) 
for their description than arbitrary $n$-partite states on $\cdn$ for $d>n$.
In this article we made use of this parametrization to remove the local dimension 
from the problem of detecting entanglement entirely.
This leads to the question: 
for which other sets of states can entanglement be detected in a dimension-free manner?

We presented our results for Werner states, 
however it is not hard to see that they can also be applied to 
quantum states $\varrho\in L(\cdn)$ that are invariant with respect to 
$U^{\otimes (n-k)} \ot \overbar{U}^{\otimes k}$ 
for any $k$. 
Such states are relevant for efficient port-based teleportation schemes~\cite{studzinski2021efficient}
and are elements of the walled Brauer algebra~\cite{Mozrzymas_2018}.
Thus they can be expanded in terms of partially transposed permutation operators, 
\begin{equation*}
\sum_{\sigma \in S_n} a_\sigma \eta_d(\sigma)^{T_k}\,,\qquad a_\sigma\in\C
\end{equation*} 
where $\cdot^{T_k}$ 
is the partial transpose acting on the last $k$ systems~\cite[Lemma 6]{PhysRevA.63.042111}.
As in the case of Werner states, it suffices to consider entanglement witnesses $\mathcal{W}$ for which the same invariance holds.

In contrast with $\eta_d$, the map $\widetilde{\eta}_d=\eta_d^{T_k}$ is not a $*$-representation of the algebra $\C S_n$. However, one can choose a ring structure on the vector space $\C S_n$ in a natural way, resulting in the aforementioned walled Brauer algebra $\cB_n$, so that the map $\widetilde{\eta}_d$ is a $*$-representation of $\cB_n$. By looking at the irreducible representations of $\cB_n$, one obtains a map $\widetilde{\mu}_d:\cB_n\to L(\cdn)$ by mimicking the construction of $\mu_d$ before, only now relying on a different ring structure (centrally primitive idempotents in $\cB_n$).
If $\varrho=\widetilde{\mu}_d(r)$ and $\mathcal{W}=\widetilde{\eta}_d(w)$ for some $r,w\in\cB_n$, then
$\tr(\mathcal{W} \varrho)$ equals the trace of $rw$ under the regular representation of $\cB_n$.
Similarly, the minimization of an operator containing partial transposes $\sum_{\sigma \in S_n} w_\sigma \eta_d(\sigma)^{T_k}$ over the set of separable states,
\begin{align*}
 &
 \min_{\ket{v_1}, \dots, \ket{v_n} \in \C^n} 
 \tr\big(\sum_{\sigma \in S_n} w_\sigma \eta_d(\sigma)^{T_k} \dyad{v_1} \ot \cdots \ot \dyad{v_k} \ot \cdots \ot \dyad{v_n})\big) \\
 = &
 \min_{\ket{v_1}, \dots, \ket{v_n} \in \C^n} 
 \tr\big(\sum_{\sigma \in S_n} w_\sigma \eta_d(\sigma) \dyad{v_1} \ot \cdots \ot \dyad{v_k}^T \ot \cdots \ot \dyad{v_n}^T)\big)\\
  = &
 \min_{\ket{v_1}, \dots, \ket{v_n} \in \C^n} 
 \tr\big(\sum_{\sigma \in S_n} w_\sigma \eta_d(\sigma) \dyad{v_1} \ot \cdots \ot \dyad{v_k} \ot \cdots \ot \dyad{v_n})\big)\,,
\end{align*}
reduces to that of an operator $\sum_{\sigma \in S_n} w_\sigma \eta_d(\sigma)$ with all partial transposes removed. Therefore nonnegativity of $\mathcal{W}=\widetilde{\eta}_d(w)$ on separable states corresponds to nonnegativity of $f_w$ on the spectrahedron $\cZ$ as before.
It follows that:
\begin{corollary}
Analogs of Theorems~\ref{thm:werner-gram} and \ref{thm:trace_poly_formulation}, Corollaries \ref{c:indep} and \ref{cor:dim-free}, 
 and the two hierarchies \ref{eq:sdp_tracepoly} and \ref{eq:sdp_f_o_m}
hold for states with 
$U^{\otimes (n-k)} \ot \overbar{U}^{\otimes k}$-invariance.
\end{corollary}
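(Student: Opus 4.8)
The plan is to re-run each cited proof with the group algebra $\C S_n$ replaced by the walled Brauer algebra $\cB_n$, the permutation representation $\eta_d$ replaced by $\widetilde{\eta}_d=\eta_d^{T_k}$, and the maps $\tau$, $\mu_d$, $J_d$ replaced by their $\cB_n$-counterparts. The preceding discussion already records the two structural facts that make this possible: first, $\widetilde{\eta}_d$ is a $*$-representation of $\cB_n$ (rather than of $\C S_n$); second, the separability test is \emph{insensitive} to the partial transpose, since $\dyad{v_i}^T=\dyad{\overline{v_i}}$ is again a rank-one projector, so minimizing $\tr(\widetilde{\eta}_d(w)\varrho)$ over product states $\varrho$ coincides with minimizing $\tr(\eta_d(w)\varrho)$, i.e.\ with nonnegativity of the \emph{same} generalized matrix function $f_w$ on the \emph{same} elliptope $\cZ$ via \eqref{eq:pro}.

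First I would set up the representation-theoretic scaffolding for $\cB_n$. Fix $e\ge n$ so that $\widetilde{\eta}_e$ is faithful and $\cB_n$ is semisimple; decompose $\cB_n$ into its isotypic components (indexed now by the irreducibles of the walled Brauer algebra rather than by partitions of $n$), let $\tau_{\cB}$ be its regular character, and define the Weingarten-type operator and the map $\widetilde{\mu}_d:\cB_n\to L(\cdn)$ exactly as $\Wg(d,n)$ and $\mu_d$ were defined, now using the centrally primitive idempotents of $\cB_n$ and the corresponding projections in $L(\cdn)$. The analog of Lemma~\ref{lem:CSn2Hilbert}, namely $\tr(\widetilde{\mu}_d(a)\widetilde{\eta}_d(b))=\tau_{\cB}(ab)$ together with the state correspondence, then follows by the identical argument: its only inputs are that $\widetilde{\eta}_d$ is a $*$-representation sending central idempotents to mutually orthogonal projections, and the uniqueness of traces on central simple algebras, all of which are available by semisimplicity.

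With this scaffolding and the transpose-insensitivity of separability, the two technical Lemmas~\ref{l:rk1} and \ref{l:rk2} require no change at all (they concern only $f_w$ and $\cZ$), and the proofs of Theorem~\ref{thm:werner-gram}, Corollary~\ref{c:indep} and Corollary~\ref{cor:dim-free} transcribe verbatim after the substitutions above. For the trace-polynomial side one re-derives the analog of \eqref{eq:T_sig-eq-tensor}: absorbing the partial transpose $T_k$ onto the matrix arguments turns $\tr(\widetilde{\eta}_d(\sigma)\,X_1\ot\cdots\ot X_n)$ into a trace polynomial evaluated on the tuple with $k$ arguments transposed, and since transposition preserves positive semidefiniteness and we quantify over \emph{all} positive semidefinite tuples, the nonnegativity condition — hence Theorem~\ref{thm:trace_poly_formulation} and the hierarchy \ref{eq:sdp_tracepoly} — is unchanged. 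The first hierarchy \ref{eq:sdp_f_o_m} references only $f_w$, $\cZ$, $\tau(rw)$ and Putinar's Positivstellensatz, so it carries over directly with $\tau$ replaced by $\tau_{\cB}$.

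The main obstacle is the first step, the representation theory of $\cB_n$. Unlike $S_n$, whose irreducibles are indexed by a single partition $\lambda\vdash n$ and whose semisimplicity and Schur--Weyl duality are classical, the walled Brauer algebra is semisimple only for $d$ large relative to $n$, its simple modules are indexed by pairs of partitions, and the double-centralizer property with $U^{\otimes(n-k)}\ot\overbar{U}^{\otimes k}$ must be invoked in place of ordinary Schur--Weyl duality. Verifying that $\widetilde{\eta}_e$ is faithful for large $e$ (the analog of $\eta_n$ restricted to $J_n=\C S_n$ being a $*$-embedding), and constructing the corresponding idempotents and Weingarten operator concretely, is where the genuine work lies; once these are in hand, every remaining step is either verbatim transcription or handled by the partial-transpose-absorption identity.
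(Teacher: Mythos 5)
Your proposal is correct and follows essentially the same route as the paper, whose own ``proof'' is precisely the preceding discussion: pass to the walled Brauer algebra $\cB_n$ where $\widetilde{\eta}_d=\eta_d^{T_k}$ becomes a $*$-representation, build $\widetilde{\mu}_d$ from its centrally primitive idempotents, and use the partial-transpose-absorption identity on product states to reduce nonnegativity of $\widetilde{\eta}_d(w)$ on separable states to nonnegativity of the same $f_w$ on $\cZ$. You in fact supply more detail than the paper does, and rightly flag the semisimplicity/faithfulness caveats for $\cB_n$ that the paper leaves implicit.
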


\subsection{Witnesses for arbitrary states}

Our approach also allows to detect entanglement in arbitrary states: given some state $\varrho \in L(\cdn)$, the twirl  
\begin{equation}\label{eq:twirl}
   E(\varrho) = \int_{U \in \mathcal{U}_d} U^{\otimes n} \varrho (U^\dag)^{\otimes n} dU
\end{equation}
yields a Werner state which can then be subjected to our hierarchies. 
Note that not every entangled state remains entangled under the twirling~\eqref{eq:twirl}.
The computation of the integral ~\eqref{eq:twirl} can be done in the following way~\cite{CollinsSniady2006, procesi2020note}.
Define
\begin{equation*}
   \Phi(\varrho) = \sum_{\sigma \in S_n} \tr(\sigma^{-1} \varrho) \eta_d(\sigma)
\end{equation*}
If $d\geq n$ then
\begin{equation*}
   E(\varrho) = \Phi(\varrho) \operatorname{Wg}(d,n)\,.
\end{equation*}
where $\Wg$ is the (Formanek-) Weingarten operator from Eq.~\eqref{eq:def_Wg}.
This yields an invariant state expanded in terms of the permutation operators, 
which can be subjected to our hierarchies \ref{eq:sdp_f_o_m} and \ref{eq:sdp_tracepoly}.

\subsection{Immanant inequalities}
We end with noting that the methods presented are directly applicable to the positivity of generalized matrix functions
[cf. Eq.~\eqref{eq:genmatfun}] and are of particular interest in the context of 
long-standing open conjectures on immanant inequalities~\cite{grone1988, Zhang2016, huber2021matrix}. 
For this is will likely be useful to take into account further symmetries~\cite{riener2013exploiting} and sparsity~\cite{klep2021sparse,nctssos}
in the semidefinite programs.

\bibliographystyle{alpha}
\bibliography{current_bib}
\end{document}